\newtheorem{theorem}{Theorem}[section]
\newtheorem{lemma}[theorem]{Lemma}
\newtheorem{proposition}[theorem]{Proposition}
\newtheorem{corollary}[theorem]{Corollary}
\newtheorem{assumption}[theorem]{Assumption}
\def\blacksquare{
\thinspace\nobreak \vrule width 5pt height 5pt depth 0pt}
\newenvironment{proof}{\begin{trivlist}
                       \item[]\hspace{0cm}{\bf Proof. }
                       \hspace{0cm} }{\hfill $\blacksquare$
                     \end{trivlist}}
\newenvironment{proofof}[1]{\begin{trivlist}
                       \item[]\hspace{0cm}{\bf Proof of #1. }
                       \hspace{0cm} }{\hfill $\blacksquare$
                     \end{trivlist}}
\definecolor{gr}{rgb}   {0.,   0.69,   0.23 }
\definecolor{bl}{rgb}   {0.,   0.5,   1. }
\definecolor{mg}{rgb}   {0.85,  0.,    0.85}
\definecolor{yl}{rgb}   {0.8,  0.7,   0.}
\newcommand{\dx}{\mathrm{d}\,}
\newcommand{\sL}{\mathsf{L}}
\newcommand{\sH}{\mathsf{H}}
\newcommand{\sW}{\mathsf{W}}
\newcommand{\phy}{\varphi}
\def\Dir{\mathsf{Dir}}
\def\dx{\,\mathsf{d}}
\def\R{\,\mathbb{R}}
\def\M{\,\mathbb{M}}
\def\Z{\,\mathbb{Z}}
\def\T{\,\mathbb{T}}
\def\eps{\varepsilon}
\def\dr{\partial}
\title{Strong confinement limit for the nonlinear Schr\"odinger equation constrained on a curve}
\author{F. M\'ehats and N. Raymond}
\begin{document}
\maketitle

\begin{abstract}
This paper is devoted to the cubic nonlinear Schr\"odinger equation in a two dimensional waveguide with shrinking cross section of order $\eps$. For a Cauchy data living essentially on the first mode of the transverse Laplacian, we provide a tensorial approximation of the solution $\psi^\eps$ in the limit $\eps\to 0$, with an estimate of the approximation error, and derive a limiting nonlinear Schr\"odinger equation in dimension one. If the Cauchy data $\psi^\eps_0$ has a uniformly bounded energy, then it is a bounded sequence in $\sH^1$ and we show that the approximation is of order $\mathcal O(\sqrt{\eps})$. If we assume that $\psi^\eps_0$ is bounded in the graph norm of the Hamiltonian, then it is a bounded sequence in $\sH^2$ and we show that the approximation error is of order $\mathcal O(\eps)$.
\end{abstract}

\sloppy

\section{Motivation and results}

\subsection{Motivation}

The Dirichlet realization of the Laplacian on tubes of the Euclidean space plays an important role in the physical description of nanostructures. In the last twenty years, many papers were concerned by the influence of the geometry of the tube (curvature, torsion) on the spectrum. For instance, in \cite{Duclos95}, Duclos and Exner proved that bending a waveguide in dimension two and three always induces the existence of discrete spectrum below the essential spectrum (see also \cite{Duclos05}). Another question of interest in their paper is the limit when the cross section shrinks to a point. In particular they prove that, in some sense, the Dirichlet Laplacian on a bidimensional tube, with cross section $(-\eps,\eps)$ is well approximated by Schr¬\"odinger operator
$$-\partial_{x_{1}}^2-\frac{\kappa^2(x_{1})}{4}-\frac{1}{\eps^2}\partial_{x_{2}}^2,$$
acting on $\sL^2(\R\times(-1,1),\dx x_{1} \dx x_{2})$ and where $\kappa$ denotes the curvature of the center line of the tube. Such approximations have been recently considered in \cite{SK12} or in presence of magnetic fields \cite{KR14} through a convergence of resolvent method. Concerning this kind of results, one may refer to the memoir by Wachsmuth and Teufel \cite{WT14} where dynamical problems are analyzed in the spirit of adiabatic reductions.

In the present paper, we will consider the time dependent Schr\"odinger equation with a cubic non linearity in a waveguide and we would especially like to determine if the adiabatic reduction available in the linear framework can be used to reduce the dimension of the non linear equation and provide an effective dynamics in dimension one. The derivation of nonlinear quantum models in reduced dimensions has been the object of several works in the past years. For the modeling of the dynamics of electrons in nanostructures, the dimension reduction problem for the Schr\"odinger-Poisson system has been studied in \cite{SP1,SP4} for confinement on the plane, in \cite{SP2} for confinement on a line, and in \cite{SP3} for confinement on the sphere. For the modeling of strongly anisotropic Bose-Einstein condensates, the case of cubic nonlinear Schr\"odinger equations with an harmonic potential has been considered in \cite{BEC1,BEC2,BEC3,BEC4,BEC5}.

\subsection{Geometry and normal form}
Let us describe the geometrical context of this paper. With the same formalism, we will consider the case of unbounded curves and the case of closed curves. Consider a smooth, simple curve $\Gamma$ in $\R^2$ defined by its normal parametrization $\gamma : x_{1}\mapsto \gamma(x_{1})$. For $\eps>0$ we introduce the map
\begin{equation}\label{Phi}
\Phi_{\eps} : \mathcal{S}=\M\times (-1,1)\ni(x_{1},x_{2})\mapsto \gamma(x_{1})+\eps x_{2}\nu(x_{1})=\mathsf{x},
\end{equation}
where $\nu(x_{1})$ denotes the unit normal vector at the point $\gamma(x_{1})$ such that $\det(\gamma'(x_{1}),\nu(x_{1}))=1$ and where
$$
\M=\left\{\begin{array}{ll}\R\quad&\mbox{for an unbounded curve,}\\
\T=\R/(2\pi\Z)\quad&\mbox{for a closed curve.}
\end{array}\right.
$$
We recall that the curvature at the point $\gamma(x_{1})$, denoted by $\kappa(x_{1})$, is defined by
$$\gamma''(x_{1})=\kappa(x_{1})\nu(x_{1}).$$
The waveguide is $\Omega_{\eps}=\Phi_{\eps}(\mathcal{S})$ and we will work under the following assumption which states that waveguide does not overlap itself and that $\Phi_{\eps}$ is a smooth diffeomorphism.
\begin{assumption}
\label{assumption1}
We assume that the function $\kappa$ is bounded, as well as its derivatives $\kappa'$ and $\kappa''$. Moreover, we assume that there exists $\eps_{0}\in (0,\frac{1}{\|\kappa\|_{\sL^\infty}})$ such that, for $\eps\in(0,\eps_{0})$, $\Phi_{\eps}$ is injective. 
\end{assumption}
We will denote by $-\Delta_{\Omega_{\eps}}^\Dir$ the Dirichlet Laplacian on $\Omega_{\eps}$. We are interested in the following equation:
\begin{equation}\label{CNLS}
i\dr_{t}\psi^\eps=-\Delta_{\Omega_{\eps}}^\Dir\psi^\eps+\lambda \eps^\alpha |\psi^\eps|^2\psi^\eps
\end{equation}
on $\Omega_\eps$ with a Cauchy condition $\psi^\eps(0;\cdot)=\psi^\eps_{0}$ and where $\alpha\geq 1$ and $\lambda\in \R$ are parameters.

By using the diffeomorphism $\Phi_{\eps}$, we may rewrite \eqref{CNLS} in the space coordinates $(x_{1},x_{2})$ given by \eqref{Phi}. For that purpose, let us introduce $m_{\eps}(x_{1},x_{2})=1-\eps x_{2}\kappa(x_{1})$ and consider the function $\psi^\eps$ transported by $\Phi_{\eps}$, 
$$\mathcal{U}_{\eps}\psi^\eps(t; x_{1},x_{2})=\phi^\eps(t ; x_{1}, x_{2})=\eps^{1/2}m_{\eps}(x_{1},x_{2})^{1/2}\psi^\eps(t ; \Phi_{\eps}(x_{1},x_{2})).$$
Note that $\mathcal{U}_{\eps}$ is unitary from $\sL^2(\Omega_{\eps}, \dx \mathsf{x})$ to $\sL^2(\mathcal{S}, \dx x_{1} \dx x_{2})$ and maps $\sH^1_0(\Omega_{\eps})$ (resp. $\sH^2(\Omega_\eps)$) to $\sH^1_0(\mathcal{S})$ (resp. to $\sH^2(\mathcal{S}))$. Moreover, the operator $-\Delta_{\Omega_{\eps}}^\Dir$ is unitarily equivalent to the self-adjoint operator on $\sL^2(\mathcal{S}, \dx x_{1} \dx x_{2})$,
$$\mathcal{U}_{\eps}(-\Delta_{\Omega_{\eps}^\Dir})\mathcal{U}^{-1}_{\eps}=\mathcal{H}_{\eps}+V_{\eps},\quad\mbox{ with } \mathcal{H}_{\eps}=\mathcal{P}^2_{\eps,1}+\mathcal{P}^2_{\eps,2}\,,$$
where $$\mathcal{P}_{\eps,1}=m_{\eps}^{-1/2}D_{x_{1}}m_{\eps}^{-1/2}, \qquad \mathcal{P}_{\eps,2}=\eps^{-1}D_{x_{2}}$$ and where the effective electric $V_\eps$ potential is defined by
$$V_{\eps}(x_{1},x_{2})=-\frac{\kappa(x_{1})^2}{4(1-\eps x_{2}\kappa(x_{1}))^{2}}.$$
We have used the standard notation $D=-i\dr$. Notice that, for all $\eps<\eps_0$, we have $m_\eps\geq 1-\eps_0\|\kappa\|_{\sL^\infty}>0$. The problem \eqref{CNLS} becomes
\begin{equation}\label{CNLS'}
i\dr_{t}\phi^\eps=\mathcal{H}_{\eps}\phi^\eps+V_{\eps}\phi^\eps+\lambda \eps^{\alpha-1}m_{\eps}^{-1} |\phi^\eps|^2\phi^\eps
\end{equation}
with Dirichlet boundary conditions $\phi^\eps(t;x_{1},\pm 1)=0$ and the Cauchy condition
$ \phi^\eps(\cdot ; 0)=\phi_{0}^\eps=\mathcal{U}_{\eps}\psi^\eps_0$. We notice that the domains of $\mathcal{H}_{\eps}$ and  $\mathcal{H}_{\eps}+V_{\eps}$ coincide with $\sH^2(\mathcal{S})\cap \sH^1_{0}(\mathcal{S})$.

\bigskip
In this paper we will analyze the critical case $\alpha=1$ where the nonlinear term is of the same order as the parallel kinetic energy associated to $D_{x_1}^2$. It is well-known that \eqref{CNLS} (thus \eqref{CNLS'} also) has two conserved quantities: the $\sL^2$ norm and the nonlinear energy. Let us introduce the first eigenvalue $\mu_1=\frac{\pi^2}{4}$ of $D_{x_2}^2$ on $(-1,1)$ with Dirichlet boundary conditions, associated to the eigenfunction $e_1(x_{2})=\cos\left(\frac{\pi}{2}x_{2}\right)$ and define the energy functional
\begin{align}\label{Energy}
\mathcal{E}_{\eps}(\phi)
&=\frac{1}{2}\int_{\mathcal{S}}|\mathcal{P}_{\eps,1}\phi|^2 \dx x_{1} \dx x_{2}+\frac{1}{2}\int_{\mathcal{S}}|\mathcal{P}_{\eps,2}\phi|^2 \dx x_{1} \dx x_{2}+\frac{1}{2}\int_{\mathcal{S}}\left(V_{\eps}-\frac{\mu_1}{\eps^2}\right)|\phi|^2 \dx x_{1} \dx x_{2}\nonumber\\
&\qquad +\frac{\lambda}{4}\int_{\mathcal{S}}m_{\eps}^{-1}|\phi|^4 \dx x_{1} \dx x_{2}\\&=\frac{1}{2}\int_{\Omega_\eps}|\nabla (\mathcal{U}^{-1}_{\eps}\phi)|^2 \dx x_{1} \dx x_{2}+\frac{\lambda \eps}{4}\int_{\Omega_\eps}|\mathcal{U}^{-1}_{\eps}\phi|^4 \dx x_{1} \dx x_{2}-\frac{\mu_1}{\eps^2}\int_{\Omega_\eps}|\mathcal{U}^{-1}_{\eps}\phi|^2 \dx x_{1} \dx x_{2}.\nonumber
\end{align}
Notice that we have substracted the conserved quantity $\frac{\mu_1}{2\eps^{2}}\|\phi\|_{\sL^2}^2$ to the usual nonlinear energy, in order to deal with bounded energies. Indeed, we will consider initial conditions with bounded mass and energy, which means more precisely the following assumption.
\begin{assumption}
\label{assumption2}
There exists two constants $M_0>0$ and $M_1>0$ such that the initial data $\phi_0^\eps$ satisfies, for all $\eps\in (0,\eps_0)$,
$$\|\phi^\eps_0\|_{\sL^2}\leq M_0 \quad \mbox{and}\quad \mathcal{E}_{\eps}(\phi_0^\eps)\leq M_1.$$
\end{assumption}
Let us define the projection $\Pi_1$ on $e_1$ by letting $\Pi_1 u=\langle u, e_1\rangle_{\sL^2((-1,1))} e_1$. A consequence of Assumption \ref{assumption2} is that $\phi_0^\eps$ has a bounded $\sH^1$ norm and is close to its projection $\Pi_1\phi_{0}^\eps$. Indeed, we will prove the following lemma.
\begin{lemma}\label{cauchy-tensorial}
Assume that $\phi_0^\eps$ satisfies Assumption \ref{assumption2}. Then there exists $\eps_1(M_0)\in (0,\eps_0)$ and a constant $C>0$ independent of $\eps$ such that, for all $\eps\in(0,\eps_1(M_0))$,
\begin{equation}
\label{conf}
\|\phi_0^\eps\|_{\sH^1(\mathcal S)}\leq C\quad\mbox{and}\quad \|\phi_0^\eps-\Pi_1\phi_{0}^\eps\|_{\sL^2(\M,\sH^1(-1,1))}\leq C\eps.
\end{equation}
\end{lemma}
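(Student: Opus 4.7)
The plan is to exploit the coercivity of $\mathcal{E}_\eps$ after spectrally decomposing in the transverse variable. Write $\phi_0^\eps=\phi_\parallel+\phi_\perp$ with $\phi_\parallel=\Pi_1\phi_0^\eps$ and $\phi_\perp=(1-\Pi_1)\phi_0^\eps$. Denoting by $\mu_2=\pi^2$ the second Dirichlet eigenvalue of $D_{x_2}^2$ on $(-1,1)$, the crucial spectral lower bound is
\[
\|\mathcal{P}_{\eps,2}\phi_0^\eps\|_{\sL^2}^2-\frac{\mu_1}{\eps^2}\|\phi_0^\eps\|_{\sL^2}^2=\frac{1}{\eps^2}\bigl(\|D_{x_2}\phi_\perp\|^2-\mu_1\|\phi_\perp\|^2\bigr)\ge \frac{\mu_2-\mu_1}{\eps^2}\|\phi_\perp\|_{\sL^2}^2.
\]
Combined with the uniform bound $|V_\eps|\le C$ provided by Assumption \ref{assumption1}, this gives
\[
\mathcal{E}_\eps(\phi_0^\eps)\ge \tfrac12\|\mathcal{P}_{\eps,1}\phi_0^\eps\|^2+\tfrac{\mu_2-\mu_1}{2\eps^2}\|\phi_\perp\|^2-CM_0^2+\tfrac{\lambda}{4}\int_{\mathcal{S}}m_\eps^{-1}|\phi_0^\eps|^4\,\mathrm{d}x_1\mathrm{d}x_2.
\]

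In the defocusing case $\lambda\ge 0$ the quartic term is non-negative, so Assumption \ref{assumption2} immediately yields $\|\mathcal{P}_{\eps,1}\phi_0^\eps\|^2\le C$ and $\|\phi_\perp\|_{\sL^2}^2\le C\eps^2$, with no extra constraint on $\eps$. In the focusing case $\lambda<0$ I would control the (negative) quartic term via the anisotropic Gagliardo-Nirenberg inequality
\[
\|u\|_{\sL^4(\mathcal{S})}^4\le C\|u\|_{\sL^2}^2\|\partial_{x_1}u\|_{\sL^2}\|\partial_{x_2}u\|_{\sL^2},\qquad u\in\sH^1_0(\mathcal{S}),
\]
a consequence of the one-dimensional Agmon inequality $\|g\|_\infty^2\le 2\|g\|_2\|g'\|_2$ applied successively in each variable (with the obvious lower-order correction when $\M=\T$, handled via the splitting into zero and nonzero Fourier modes). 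Using the elementary estimates $\|\partial_{x_1}\phi\|^2\le C\|\mathcal{P}_{\eps,1}\phi\|^2+C\eps^2\|\phi\|^2$ (which follows from the explicit form of $\mathcal{P}_{\eps,1}$ and the uniform bounds on $m_\eps$ and $\kappa'$) and $\|\partial_{x_2}\phi\|^2=\mu_1\|\phi\|^2+\eps^2(\|\mathcal{P}_{\eps,2}\phi\|^2-\mu_1/\eps^2\|\phi\|^2)$, the quartic term is bounded above by a constant depending on $M_0$ plus contributions of the form $(\delta+C\eps)(\|\mathcal{P}_{\eps,1}\phi\|^2+(\|\mathcal{P}_{\eps,2}\phi\|^2-\mu_1/\eps^2\|\phi\|^2))$ after a Young inequality. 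Choosing $\delta$ small in terms of $|\lambda|$ and $M_0$, and then $\eps$ small enough, absorbs the negative contribution into the positive quadratic part of $\mathcal{E}_\eps$ and gives the same two bounds as in the defocusing case. This absorption step is the main obstacle of the proof, and it is what forces the threshold $\eps_1$ to depend on $M_0$.

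Once $\|\mathcal{P}_{\eps,1}\phi_0^\eps\|^2\le C$ and $\|\phi_\perp\|_{\sL^2}^2\le C\eps^2$ are secured, both claims of \eqref{conf} follow by routine manipulations. For the tangential derivative, $\|D_{x_1}\phi_0^\eps\|^2\le C\|\mathcal{P}_{\eps,1}\phi_0^\eps\|^2+C\eps^2 M_0^2\le C$. For the transverse derivative, the identity $\|D_{x_2}\phi_\perp\|^2-\mu_1\|\phi_\perp\|^2=\eps^2(\|\mathcal{P}_{\eps,2}\phi_0^\eps\|^2-\mu_1/\eps^2\|\phi_0^\eps\|^2)\le C\eps^2$ combined with $\|\phi_\perp\|^2\le C\eps^2$ gives $\|D_{x_2}\phi_\perp\|^2\le C\eps^2$, and $\|D_{x_2}\phi_\parallel\|^2=\mu_1\|\phi_\parallel\|^2\le\mu_1M_0^2$ takes care of the rest. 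Collecting, $\|\phi_0^\eps\|_{\sH^1}\le C$ and $\|\phi_0^\eps-\Pi_1\phi_0^\eps\|_{\sL^2(\M,\sH^1(-1,1))}^2=\|\phi_\perp\|_{\sL^2}^2+\|D_{x_2}\phi_\perp\|_{\sL^2}^2\le C\eps^2$, as claimed.
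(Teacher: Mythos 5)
Your proof is correct and rests on the same pillars as the paper's own argument (Lemmas \ref{Sob-anis} and \ref{lemma-energy}): coercivity of $\mathcal E_\eps$, the transverse spectral gap, anisotropic Gagliardo--Nirenberg control of the quartic term in the focusing case, and an absorption step that is precisely what makes $\eps_1$ depend on $M_0$. The execution differs in one genuine way. The paper first splits the quartic term as $\|\Pi_1\phy\|_{\sL^4}^4+\|(\mathsf{Id}-\Pi_1)\phy\|_{\sL^4}^4$, applies the one-dimensional inequality \eqref{sobo1D} to the confined part and the two-dimensional inequality \eqref{sobo2D} only to the orthogonal part, keeps the spectral gap in the derivative form \eqref{spectral-gap}, and finishes with the transverse Poincar\'e inequality; you instead apply the two-dimensional anisotropic inequality directly to the whole function, exploiting that the factor $\|\partial_{x_2}\phi_0^\eps\|_{\sL^2}$ is only $O(1)+O(\eps\sqrt G)$, where $G=\|\mathcal P_{\eps,2}\phi_0^\eps\|_{\sL^2}^2-\mu_1\eps^{-2}\|\phi_0^\eps\|_{\sL^2}^2$, so that every contribution involving $G$ carries a factor $\eps$ and can be absorbed for $\eps$ small depending on $M_0$; you also use the $\sL^2$ form of the gap and recover the $\partial_{x_2}$ bound a posteriori from $\|D_{x_2}(\mathsf{Id}-\Pi_1)\phi_0^\eps\|_{\sL^2}^2-\mu_1\|(\mathsf{Id}-\Pi_1)\phi_0^\eps\|_{\sL^2}^2=\eps^2G$. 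Your route avoids the mode splitting of the quartic term and is slightly leaner; the paper's splitting has the advantage of delivering the lower bound \eqref{minorE} in exactly the form that is reused later (in \eqref{e3} and in Step 4 of Proposition \ref{wp}). Two small expository caveats: your final step needs the uniform bound $G\leq C$, which does not follow from your first displayed lower bound (there the gap term has already been traded for $(\mu_2-\mu_1)\eps^{-2}\|(\mathsf{Id}-\Pi_1)\phi_0^\eps\|_{\sL^2}^2$) but does follow from your absorption, since a positive fraction of $G$ survives; this should be said explicitly. And, as you note, the case $\M=\T$ of the anisotropic inequality requires the lower-order correction (a point the paper itself glosses over).
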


\subsection{Dimensional reduction and main results}
In the sequel, it will be convenient to consider the following change of temporal gauge $\phi^\eps(t; x_{1}, x_{2})=e^{-i\mu_{1}\eps^{-2} t}\phy^\eps(t; x_{1}, x_{2})$. This leads to the equation
\begin{equation}\label{CNLS''}
i\dr_{t}\phy^\eps=\mathcal{H}_{\eps}\phy^\eps+(V_{\eps}-\eps^{-2}\mu_{1})\phy^\eps+\lambda m_{\eps}^{-1} |\phy^\eps|^2\phy^\eps
\end{equation}
with conditions $\phy^\eps(t;x_{1},\pm 1)=0$, $\phy^\eps(0;\cdot)=\phi^\eps_{0}$.

\bigskip
In order to study \eqref{CNLS''}, the first natural try is to conjugate the equation by the unitary group $e^{it\mathcal{H}_{\eps}}$ so that the problem \eqref{CNLS''} becomes
\begin{equation}\label{CNLS''-conj}
i\dr_{t}\widetilde\phy^\eps=e^{it\mathcal{H}_{\eps}}(V_{\eps}-\eps^{-2}\mu_{1})e^{-it\mathcal{H}_{\eps}}\widetilde\phy^\eps+\lambda W_{\eps}(t;\widetilde\phy^\eps) ,\qquad \widetilde\phy^\eps(0;\cdot)=\phi_{0}^\eps,
\end{equation}
where
\begin{equation}\label{Weps}
W_{\eps}(t;\phy)=e^{it\mathcal{H}_{\eps}}m_{\eps}^{-1} |e^{-it\mathcal{H}_{\eps}}\phy|^2 e^{-it\mathcal{H}_{\eps}}\phy
\end{equation}
and where $\widetilde\phy^\eps=e^{it\mathcal{H}_{\eps}}\phy^\eps$ which satisfies $\widetilde\phy^\eps(t;x_{1},\pm 1)=0$. Nevertheless this reformulation does not take the inhomogeneity with respect to $\eps$ into account. In particular it will not be appropriate to estimate the approximation of the solution. 

We will see that \eqref{CNLS''} is well approximated in the limit $\eps\to 0$ by the following one dimensional equation
\begin{equation}\label{limit-equation}
i\dr_{t}\theta^\eps=D_{x_{1}}^2\theta^\eps-\frac{\kappa(x_{1})^2}{4}\theta^\eps+\lambda \gamma |\theta^\eps|^2\theta^\eps,
\end{equation}
with $\gamma=\int_{-1}^1 e_1(x_{2})^4 \dx x_{2}=3/4$ and $\theta^\eps(0,x_1)=\theta^\eps_0(x_1)=\int_{-1}^1\phi^\eps_0(x_1,x_2)e_1(x_2)\dx x_2$ for $x_1\in \M$ (recall that the notation $\M$ stands for $\R$ or $\T$). This last equation can be reformulated as
\begin{equation}\label{limit-equation-conj}
i\dr_{t}\widetilde\theta^\eps=e^{itD_{x_{1}}^2}\left(-\frac{\kappa^2(x_{1})}{4}\right)e^{-itD_{x_{1}}^2}\widetilde\theta^\eps+\lambda\gamma F(t;\widetilde\theta^\eps) ,\qquad \widetilde\theta(0;\cdot)=\theta^\eps_{0},
\end{equation}
where 
\begin{equation}\label{defF}
F(t; \widetilde\theta)=e^{itD_{x_{1}}^2}\left|e^{-itD_{x_{1}}^2}\widetilde\theta\right|^2e^{-itD_{x_{1}}^2}\widetilde\theta,
\end{equation}
and $\widetilde \theta^\eps=e^{itD_{x_{1}}^2}\theta^\eps$.

\bigskip
Our main results are the following theorems and their corollary.
\begin{theorem}[Solutions in the energy space]
\label{mainthmH1}
Consider a sequence of Cauchy data $\phi_0^\eps\in \sH^1_0(\mathcal S)$ satisfying Assumption \ref{assumption2}. Then:\\
(i) The limit problem \eqref{limit-equation} admits a unique solution $\theta^\eps\in C(\R_+;\sH^1(\M))\cap C^1(\R_+;\sH^{-1}(\M))$.\\[1mm]
(ii) There exists $\eps_1(M_0)\in (0,\eps_0]$ such that, for all $\eps\in (0,\eps_1(M_0))$, the two-dimensional problem \eqref{CNLS''} admits a unique solution $\phy^\eps\in C(\R_+;\sH^1_0(\mathcal{S}))\cap C^1(\R_+;\sH^{-1}(\mathcal{S}))$.\\[1mm]
(iii) For all $T>0$ there exists $C_T>0$ such that, for all $\eps\in (0,\eps_1(M_0))$, we have the error bound
\begin{equation}
\label{esti-erreur}
\sup_{t\in[0,T]}\|\phy^\eps(t)-\theta^\eps(t)e_1\|_{\sL^2(\mathcal{S})}\leq C_T\,\eps^{1/2}.
\end{equation}
\end{theorem}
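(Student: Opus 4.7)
Well-posedness for parts (i) and (ii) follows from standard NLS theory. For (i), equation \eqref{limit-equation} is a 1D cubic NLS with the bounded smooth potential $-\kappa^2/4$; Strichartz estimates for $e^{itD_{x_1}^2}$, together with the embedding $\sH^1(\M)\hookrightarrow\sL^\infty(\M)$, yield local well-posedness in $\sH^1(\M)$, and conservation of mass and of the 1D energy gives global solutions. For (ii), the same scheme applies on the waveguide cross-section: the nonlinearity $|\phy|^2\phy$ is $\sH^1$-subcritical in $\sH^1_0(\mathcal S)$, so a standard fixed-point/Strichartz argument gives local well-posedness, and the uniform $\sH^1$-bound on $\phi_0^\eps$ from Lemma~\ref{cauchy-tensorial} combined with conservation of $\mathcal{E}_\eps$ promotes it to a global solution.

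The substantive assertion is (iii). Set $R^\eps := \phy^\eps - \theta^\eps e_1$. The first step is to propagate the transverse smallness of Lemma~\ref{cauchy-tensorial}: conservation of $\mathcal{E}_\eps$ together with the Dirichlet spectral gap $\langle(\mathcal{P}_{\eps,2}^2 - \mu_1\eps^{-2})f, f\rangle\geq 3\mu_1\eps^{-2}\|f\|_{\sL^2}^2$ on $(\mathrm{Id}-\Pi_1)\sL^2$ yields, uniformly on $[0,T]$,
$$\|\phy^\eps(t)\|_{\sH^1(\mathcal S)}\leq C,\qquad \|\phy^\eps_\perp(t)\|_{\sL^2(\mathcal S)}\leq C\eps,$$
where $\phy^\eps_\perp := (\mathrm{Id}-\Pi_1)\phy^\eps$. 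The analogous energy argument for \eqref{limit-equation} bounds $\theta^\eps$ uniformly in $\sH^1(\M)$.

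Subtracting \eqref{limit-equation} tensored with $e_1$ from \eqref{CNLS''} and using $\mathcal{P}_{\eps,2}^2 e_1 = \mu_1\eps^{-2}e_1$ produces the error equation
$$i\dr_t R^\eps = (\mathcal{H}_\eps + V_\eps - \mu_1\eps^{-2})R^\eps + S^\eps_{\text{lin}} + S^\eps_{\text{nl}},$$
with $S^\eps_{\text{lin}} = \bigl[(\mathcal{P}_{\eps,1}^2 - D_{x_1}^2) + (V_\eps + \kappa^2/4)\bigr](\theta^\eps e_1)$ and $S^\eps_{\text{nl}} = \lambda m_\eps^{-1}|\phy^\eps|^2\phy^\eps - \lambda\gamma|\theta^\eps|^2\theta^\eps e_1$. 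Pairing in $\sL^2(\mathcal S)$ with $R^\eps$ and taking the imaginary part annihilates the self-adjoint operator, leaving $\tfrac{1}{2}\tfrac{\dx}{\dx t}\|R^\eps\|_{\sL^2}^2 = \Im\langle S^\eps_{\text{lin}}+S^\eps_{\text{nl}}, R^\eps\rangle$. Expanding $m_\eps^{-2}=1+2\eps x_2\kappa + O(\eps^2)$ identifies the top-order piece of $(\mathcal{P}_{\eps,1}^2 - D_{x_1}^2)(\theta^\eps e_1)$ as $2\eps x_2\kappa\, D_{x_1}^2\theta^\eps\,e_1$, plus an $O(\eps)$ remainder involving at most $D_{x_1}\theta^\eps$, while $(V_\eps + \kappa^2/4) = O(\eps)$ as a multiplier; lacking any $\sH^2$ regularity on $\theta^\eps$, the second-derivative term is integrated by parts in $x_1$ onto $R^\eps$, yielding $|\langle S^\eps_{\text{lin}}, R^\eps\rangle| \leq C\eps\,\|\theta^\eps\|_{\sH^1(\M)}\,\|R^\eps\|_{\sH^1(\mathcal S)}\leq C\eps$.

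Decompose the nonlinear source as $\lambda m_\eps^{-1}(|\phy^\eps|^2\phy^\eps - |\theta^\eps|^2\theta^\eps e_1^3) + \lambda(m_\eps^{-1}-1)|\theta^\eps|^2\theta^\eps e_1^3 + \lambda|\theta^\eps|^2\theta^\eps(e_1^3 - \gamma e_1)$. The first piece is a cubic Lipschitz difference which, paired with $R^\eps$, is controlled by 2D Gagliardo--Nirenberg and the uniform $\sH^1$-bound as $O(\|R^\eps\|_{\sL^2}^2)$; the second is $O(\eps)$ in $\sL^2(\mathcal S)$ since $m_\eps^{-1}-1 = O(\eps)$ and $\|\theta^\eps\|_{\sL^6(\M)}\leq C$; the third, being orthogonal to $e_1$ in $x_2$, pairs with $R^\eps$ only through $\phy^\eps_\perp$, so Cauchy--Schwarz gives the bound $C\|\theta^\eps\|_{\sL^6}^3\|\phy^\eps_\perp\|_{\sL^2(\mathcal S)}\leq C\eps$. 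Collecting, $\tfrac{\dx}{\dx t}\|R^\eps\|_{\sL^2}^2 \leq C\eps + C\|R^\eps\|_{\sL^2}^2$ on $[0,T]$; combined with $\|R^\eps(0)\|_{\sL^2}\leq C\eps$ from Lemma~\ref{cauchy-tensorial}, Gr\"onwall's lemma produces $\|R^\eps(t)\|_{\sL^2}^2\leq C_T\eps$, which is \eqref{esti-erreur}. The principal obstacle is precisely the duality trick on the $\eps\, D_{x_1}^2\theta^\eps$ term: with only $\sH^1$-control on $\theta^\eps$, no $\sL^2$-bound on $S^\eps_{\text{lin}}$ beyond $O(1)$ is available, and integrating one derivative by parts onto $R^\eps$ produces $\|R^\eps\|_{\sL^2}^2 = O(\eps)$ rather than the $O(\eps^2)$ one would get with $\sH^2$ data, which is exactly the source of the $\eps^{1/2}$ loss compared to the $\sH^2$ regime.
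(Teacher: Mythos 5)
Your proof of (iii) is correct in substance but follows a genuinely different route from the paper. You run a direct $\sL^2$--Gronwall estimate on the full two-dimensional difference $R^\eps=\phy^\eps-\theta^\eps e_1$: the only dangerous source, the $O(\eps)$ coefficient times $D_{x_1}^2\theta^\eps$, is handled by the $\sH^{-1}$--$\sH^1_0$ duality (one integration by parts onto $R^\eps$), which is exactly where the $\eps^{1/2}$ is lost; the transverse part is controlled by energy conservation and the spectral gap as in \eqref{conft}; and the cubic difference paired with $R^\eps$ is indeed $O(\|R^\eps\|_{\sL^2}^2)$ --- but note that this step requires expanding $\phy^\eps=\theta^\eps e_1+R^\eps$ before estimating, since $\phy^\eps$ itself has no uniform $\sL^\infty$ bound: with $\|\theta^\eps\|_{\sL^\infty(\M)}\leq C$ (one-dimensional Sobolev and \eqref{boundH1-theta}) and $\|R^\eps\|_{\sH^1(\mathcal S)}\leq C$, Gagliardo--Nirenberg gives $\int_{\mathcal S}|R^\eps|^3+\int_{\mathcal S}|R^\eps|^4\lesssim \|R^\eps\|_{\sL^2}^2$, so your claim closes; one should also record that the identity $\frac{\dx}{\dx t}\|R^\eps\|_{\sL^2}^2=2\,\Im\langle S^\eps_{\mathrm{lin}}+S^\eps_{\mathrm{nl}},R^\eps\rangle$ is legitimate at the $C(\sH^1)\cap C^1(\sH^{-1})$ level via the duality pairing. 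The paper proceeds quite differently: it regularizes the data, $\phi_0^{\eps,\eta}=\Pi_1(1+\eta\eps D_{x_1}^2)^{-1/2}\phi_0^\eps$, reruns the $\sH^2$ proof of Theorem \ref{mainthmH2} with $\sH^2$ norms of size $(\eta\eps)^{-1/2}$ (projected equation on $e_1$, $\sL^\infty$ bootstrap via \eqref{interpolation} and Lemma \ref{lemma3}), and concludes with two $\sL^2$ stability estimates, each already of size $\eps^{1/2}$. Your route avoids the $\sH^2$ machinery entirely in (iii) and is shorter; the paper's route keeps every source term in $\sL^2$ (no duality argument to justify) and yields uniform $\sL^\infty$ control of the approximate solutions as a by-product. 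Two caveats on your sketch of (i)--(ii): for the two-dimensional problem, ``standard Strichartz'' is not what is available for this variable-coefficient Dirichlet problem --- the paper uses $\sH^2$ local theory plus Trudinger/Br\'ezis--Gallouet--Ogawa uniqueness \cite{BG80,Ogawa} --- and the global-in-time $\sH^1$ bound for small $\eps$ comes from the coercivity of $\mathcal E_\eps$ in Lemma \ref{lemma-energy} (essential when $\lambda<0$), not from conservation of mass and energy alone.
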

\begin{theorem}[$\sH^2$ solutions]
\label{mainthmH2}
Assume that $\phi_0^\eps\in \sH^2\cap \sH^1_0(\mathcal S)$ and that there exist $M_0>0$, $M_2>0$ such that, for all $\eps\in (0,\eps_0)$, 
\begin{equation}
\label{ass3}
\|\phi_0^\eps\|_{\sL^2}\leq M_0,\qquad \left\|(\mathcal H_\eps-\frac{\mu_1}{\eps^2})\phi_0^\eps\right\|_{\sL^2}\leq M_2.
\end{equation}
Then $\phi_0^\eps$ satisfies Assumption \ref{assumption2} and:\\
(i) The limit problem \eqref{limit-equation} admits a unique solution $\theta^\eps\in C(\R_+;\sH^2(\M)\cap C^1(\R_+;\sL^2(\M))$.\\[1mm]
(ii) For all $\eps\in (0,\eps_1(M_0))$, the two-dimensional problem \eqref{CNLS''} admits a unique solution $\phy^\eps\in C(\R_+;\sH^2\cap \sH^1_0(\mathcal{S}))\cap C^1(\R_+;\sL^2(\mathcal{S}))$. The constant $\eps_1(M_0)$ is the same as in Theorem \ref{mainthmH1}.\\[1mm]
(iii) For all $T>0$ there exists $C_T>0$ such that, for all $\eps\in (0,\eps_1(M_0))$, we have the refined error bound
\begin{equation}
\label{esti-erreur2}
\sup_{t\in[0,T]}\|\phy^\eps(t)-\theta^\eps(t)e_1\|_{\sL^2(\mathcal{S})}\leq C_T\,\eps.
\end{equation}
\end{theorem}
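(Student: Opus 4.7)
The plan is to upgrade Theorem~\ref{mainthmH1} by propagating uniformly in $\eps$ the stronger quantity $\|(\mathcal H_\eps-\mu_1\eps^{-2})\phy^\eps(t)\|_{\sL^2}$ on every finite time interval, and to exploit its coercivity on the transverse complement of $e_1$ to obtain a quadratic transverse decay $\|r^\eps\|_{\sL^2}=\mathcal O(\eps^2)$ that lifts the error from $\sqrt{\eps}$ to $\eps$. As a preliminary, I would check that~\eqref{ass3} implies Assumption~\ref{assumption2}: in the decomposition
\[
\mathcal E_\eps(\phi_0^\eps)=\tfrac12\langle(\mathcal H_\eps-\mu_1\eps^{-2})\phi_0^\eps,\phi_0^\eps\rangle+\tfrac12\int_{\mathcal S}V_\eps|\phi_0^\eps|^2\dx x_1\dx x_2+\tfrac\lambda4\int_{\mathcal S}m_\eps^{-1}|\phi_0^\eps|^4\dx x_1\dx x_2,
\]
the first two terms are controlled by Cauchy--Schwarz together with $\|V_\eps\|_{\sL^\infty}\leq C$ and $M_0,M_2$, while the quartic term is handled via Gagliardo--Nirenberg on the two-dimensional strip once the $\sH^1$ bound (coming from the coercivity of $\mathcal H_\eps-\mu_1\eps^{-2}$) has been established. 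Theorem~\ref{mainthmH1} then provides the existence parts of (i)--(ii). The additional $\sH^2$ regularity of $\theta^\eps$ claimed in (i) follows from standard one-dimensional cubic NLS theory with the bounded potential $-\kappa^2/4$, once $\theta_0^\eps=\Pi_1\phi_0^\eps\in\sH^2(\M)$ is checked by projecting $(\mathcal H_\eps-\mu_1\eps^{-2})\phi_0^\eps$ onto $e_1$: the transverse block $\mathcal P_{\eps,2}^2-\mu_1\eps^{-2}$ annihilates $\theta_0^\eps e_1$, while $\Pi_1\mathcal P_{\eps,1}^2$ reduces to $D_{x_1}^2\Pi_1$ up to a geometric remainder of order $\eps$ absorbed by the $\sH^1$ bound.

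The main analytical difficulty lies in (ii). Rather than working directly with $\mathcal H_\eps\phy^\eps$, I would propagate the equivalent quantity $\chi^\eps:=\dr_t\phy^\eps$, which is bounded in $\sL^2$ at $t=0$ by $M_2+\|V_\eps\|_{\sL^\infty}M_0+C\|\phi_0^\eps\|_{\sH^1}^3$ thanks to~\eqref{CNLS''} and the Sobolev embedding $\sH^1(\mathcal S)\hookrightarrow\sL^6(\mathcal S)$. Differentiating~\eqref{CNLS''} in time yields the linear equation
\[
i\dr_t\chi^\eps=(\mathcal H_\eps+V_\eps-\mu_1\eps^{-2})\chi^\eps+2\lambda m_\eps^{-1}|\phy^\eps|^2\chi^\eps+\lambda m_\eps^{-1}(\phy^\eps)^2\overline{\chi^\eps},
\]
and the standard $\sL^2$ energy estimate (the self-adjoint block integrates away) gives $\tfrac{d}{dt}\|\chi^\eps\|_{\sL^2}^2\leq C\|\phy^\eps\|_{\sL^\infty}^2\|\chi^\eps\|_{\sL^2}^2$. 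Here lies the main obstacle: since $\mathcal S$ is two-dimensional, $\sH^1(\mathcal S)$ does not embed in $\sL^\infty$, and closing the loop requires a Brezis--Gallouet-type inequality $\|\phy^\eps\|_{\sL^\infty}^2\leq C\|\phy^\eps\|_{\sH^1}^2\bigl(1+\log(1+\|\chi^\eps\|_{\sL^2})\bigr)$, combined with the uniform $\sH^1$ bound from Theorem~\ref{mainthmH1}. A logarithmic Gronwall argument then delivers uniform-in-$\eps$ control of $\|\chi^\eps\|_{\sL^2}$ on any compact time interval, and inverting~\eqref{CNLS''} transfers this back to $\|(\mathcal H_\eps-\mu_1\eps^{-2})\phy^\eps(t)\|_{\sL^2}\leq C_T$.

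Finally, for~\eqref{esti-erreur2}, I decompose $\phy^\eps=\sigma^\eps e_1+r^\eps$ with $\sigma^\eps=\Pi_1\phy^\eps$ and exploit the algebraic identity
\[
(\mathcal H_\eps-\mu_1\eps^{-2})\phy^\eps=\mathcal P_{\eps,1}^2\phy^\eps+\eps^{-2}(D_{x_2}^2-\mu_1)r^\eps.
\]
Since $D_{x_2}^2-\mu_1\geq 3\mu_1$ on $e_1^\perp$ and the left-hand side is uniformly bounded in $\sL^2$ by the previous step, this yields $\|r^\eps(t)\|_{\sL^2}\leq C_T\eps^2$, a quadratic improvement upon Lemma~\ref{cauchy-tensorial}. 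Projecting~\eqref{CNLS''} onto $e_1$ then produces an equation for $\sigma^\eps$ that coincides with~\eqref{limit-equation} up to an $\sL^2$-residual of size $\mathcal O(\eps)$, after expanding $m_\eps^{-1}=1+\eps x_2\kappa+\mathcal O(\eps^2)$ and $V_\eps=-\kappa^2/4+\mathcal O(\eps)$ and invoking the uniform $\sL^\infty$ bounds on $\sigma^\eps$ and $\theta^\eps$ coming from 1D Sobolev embedding on their $\sH^1$ bounds; the multilinear contributions involving $r^\eps$ are of size $\mathcal O(\|r^\eps\|_{\sL^2})=\mathcal O(\eps^2)$. A Gronwall argument on $\sigma^\eps-\theta^\eps$ then delivers $\|\sigma^\eps-\theta^\eps\|_{\sL^2}\leq C_T\eps$, and combining this with the refined bound on $r^\eps$ gives~\eqref{esti-erreur2}.
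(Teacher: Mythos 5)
Your proposal is correct in substance and follows the paper's overall architecture (derive Assumption \ref{assumption2} from \eqref{ass3}, get the uniform $\sH^1$ and $O(\eps)$ transverse bounds from the conserved mass and energy, propagate a graph-norm/$\sH^2$ bound uniformly in $\eps$, project onto $e_1$ and conclude by Gronwall), but it takes a genuinely different route at the crucial step of the uniform $\sH^2$ bound. The paper only proves a \emph{conditional} estimate (Lemma \ref{lemma3}: propagation of $\|(\mathcal H_\eps-\mu_1\eps^{-2})\phy^\eps(t)\|_{\sL^2}$ under an a priori $\sL^\infty$ bound on $\phy^\eps$) and then runs a bootstrap/continuation argument on the maximal time $T_\eps$ on which $\|\phy^\eps(t)\|_{\sL^\infty}\leq M:=2\sup\|\theta^\eps\|_{\sL^\infty}$, using the interpolation Lemma \ref{lemma2} and the $\sH^2$ bound \eqref{boundH2-theta} on $\theta^\eps$ to exclude $T_\eps<T$. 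You instead close an \emph{unconditional} logarithmic Gronwall inequality for $\chi^\eps=\dr_t\phy^\eps$: the Br\'ezis--Gallouet inequality \eqref{soboBG} (whose constant depends only on the uniform $\sH^1$ bound \eqref{conft}), combined with the $\eps$-uniform norm equivalence \eqref{equivnorm} and the equation to get $\|\phy^\eps\|_{\sH^2}\lesssim 1+\|\chi^\eps\|_{\sL^2}$, yields $\frac{d}{dt}\|\chi^\eps\|_{\sL^2}^2\lesssim \bigl(1+\log(1+\|\chi^\eps\|_{\sL^2})\bigr)\|\chi^\eps\|_{\sL^2}^2$ and hence a uniform-in-$\eps$ $\sH^2$ bound on $[0,T]$ with no bootstrap at all; this is exactly the device the paper uses only for fixed-$\eps$ well-posedness in Step 3 of the proof of Proposition \ref{wp}, and your observation that all constants can there be taken $\eps$-uniform (thanks to \eqref{conft} and \eqref{equivnorm}) is the real content of your variant, buying a shorter argument at the price of invoking \eqref{soboBG} once more. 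Your spectral-gap identity then gives the quadratic transverse bound $\|(\mathsf{Id}-\Pi_1)\phy^\eps(t)\|_{\sL^2}\leq C_T\eps^2$, which is stronger than the $O(\eps)$ bound \eqref{conft} the paper uses and more than is needed for \eqref{esti-erreur2}; the remaining projection-plus-Gronwall step coincides with the paper's (\eqref{projCNLS} together with Lemma \ref{lemma1}). Two small imprecisions, neither fatal: to bound the cubic remainder terms containing $(\mathsf{Id}-\Pi_1)\phy^\eps$ you need an $\sL^\infty$ bound on $\phy^\eps$ itself, not only on $\Pi_1\phy^\eps$ and $\theta^\eps$ --- in your scheme this is available from the uniform $\sH^2$ bound via $\sH^2(\mathcal S)\hookrightarrow\sL^\infty(\mathcal S)$ --- and in item (i) the $\sH^2$ regularity of $\theta_0^\eps$ follows more directly from $\|\theta_0^\eps\|_{\sH^2(\M)}\leq\|\phi_0^\eps\|_{\sH^2(\mathcal S)}$ than from your projection argument, whose geometric remainder requires $\sH^2$ rather than $\sH^1$ control.
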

Coming back by $\mathcal{U}_{\eps}^{-1}$ to the original equation \eqref{CNLS}, an immediate consequence of our two theorems is the following corollary.
\begin{corollary}
Consider a sequence of Cauchy data $\psi_0^\eps\in \sH^1_0(\Omega_\eps)$ with bounded mass and energy:
$$\|\psi_0^\eps\|_{\sL^2(\Omega_\eps)}\leq M_0\quad \mbox{and}\quad \frac{1}{2}\|\psi_0^\eps\|_{\sH^1(\Omega_\eps)}^2+\frac{\lambda \eps}{4}\|\psi_0^\eps\|_{\sL^4(\Omega_\eps)}^4-\frac{\mu_1}{2\eps^2}\|\psi_0^\eps\|_{\sL^2(\Omega_\eps)}^2\leq M_1.$$ Then for all $\eps\in (0,\eps_1)$, the confined NLS equation \eqref{CNLS} with $\alpha=1$ admits a unique solution $\psi^\eps\in C(\R_+;\sH^1_0(\Omega_\eps))\cap C^1(\R_+;\sH^{-1}(\Omega_\eps))$ and, for all $T>0$, we have the estimate
$$\sup_{t\in[0,T]}\|\psi^\eps(t)-e^{-i\mu_1/\eps^2t}\,\mathcal{U}_{\eps}^{-1}\left(\theta^\eps(t)e_1\right)\|_{\sL^2(\Omega_\eps)}\leq C_T\,\eps^{1/2}.$$
If, additionally, $\psi_0^\eps\in \sH^2(\Omega_\eps)$ and $\|(-\Delta_{\Omega_{\eps}}^\Dir-\frac{\mu_1}{\eps^2})\phi_0^\eps\|_{\sL^2(\Omega_\eps)}$ is bounded uniformly with respect to $\eps$, then we have $\psi^\eps\in C(\R_+;\sH^2\cap \sH^1_0(\Omega_\eps))\cap C^1(\R_+;\sL^2(\Omega_\eps))$ with, for all $T>0$, the estimate
$$\sup_{t\in[0,T]}\|\psi^\eps(t)-e^{-i\mu_1/\eps^2t}\,\mathcal{U}_{\eps}^{-1}\left(\theta^\eps(t)e_1\right)\|_{\sL^2(\Omega_\eps)}\leq C_T\,\eps.$$
\end{corollary}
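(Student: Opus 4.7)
The plan is to show that the hypotheses on $\psi_0^\eps$ stated in the corollary are exactly the hypotheses on $\phi_0^\eps = \mathcal{U}_\eps \psi_0^\eps$ needed to apply Theorems \ref{mainthmH1} and \ref{mainthmH2}, so that the corollary becomes essentially a translation of those two theorems through the unitary transform $\mathcal U_\eps$ and the time-dependent phase $e^{-i\mu_1 t/\eps^2}$.

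First I would translate the mass and energy bounds. Since $\mathcal U_\eps$ is unitary from $\sL^2(\Omega_\eps)$ to $\sL^2(\mathcal S)$ we immediately get $\|\phi_0^\eps\|_{\sL^2(\mathcal S)} = \|\psi_0^\eps\|_{\sL^2(\Omega_\eps)}\leq M_0$. For the energy, I would invoke the second expression for $\mathcal E_\eps$ in \eqref{Energy}, which was written precisely to highlight the identity
\[
\mathcal E_\eps(\phi_0^\eps)=\frac{1}{2}\|\psi_0^\eps\|_{\sH^1(\Omega_\eps)}^2+\frac{\lambda\eps}{4}\|\psi_0^\eps\|_{\sL^4(\Omega_\eps)}^4-\frac{\mu_1}{2\eps^2}\|\psi_0^\eps\|_{\sL^2(\Omega_\eps)}^2\leq M_1,
\]
so that Assumption \ref{assumption2} is satisfied.

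Next I would apply Theorem \ref{mainthmH1} to obtain the unique solution $\phy^\eps\in C(\R_+;\sH^1_0(\mathcal S))\cap C^1(\R_+;\sH^{-1}(\mathcal S))$ of \eqref{CNLS''} and the bound \eqref{esti-erreur}. Undoing the gauge change gives $\phi^\eps(t)=e^{-i\mu_1 t/\eps^2}\phy^\eps(t)$, a solution of \eqref{CNLS'}, and applying $\mathcal U_\eps^{-1}$ (which maps $\sH^1_0(\mathcal S)$ to $\sH^1_0(\Omega_\eps)$ and $\sH^{-1}$ to $\sH^{-1}$) yields $\psi^\eps=\mathcal U_\eps^{-1}\phi^\eps$ solving \eqref{CNLS} with the required regularity. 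Uniqueness transports the same way because $\mathcal U_\eps$ is a bijection between the relevant function spaces. Since $\mathcal U_\eps^{-1}$ is unitary on $\sL^2$ and the phase $e^{-i\mu_1 t/\eps^2}$ is an isometry, the $\sL^2(\Omega_\eps)$ error bound follows directly from \eqref{esti-erreur}:
\[
\|\psi^\eps(t)-e^{-i\mu_1 t/\eps^2}\mathcal U_\eps^{-1}(\theta^\eps(t)e_1)\|_{\sL^2(\Omega_\eps)}=\|\phy^\eps(t)-\theta^\eps(t)e_1\|_{\sL^2(\mathcal S)}\leq C_T\eps^{1/2}.
\]

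For the $\sH^2$ refinement, the only additional point is to connect the hypothesis $\|(-\Delta_{\Omega_\eps}^\Dir-\mu_1/\eps^2)\psi_0^\eps\|_{\sL^2(\Omega_\eps)}\leq \tilde M_2$ with the assumption \eqref{ass3} on $\phi_0^\eps$. By the unitary equivalence $\mathcal U_\eps(-\Delta_{\Omega_\eps}^\Dir)\mathcal U_\eps^{-1}=\mathcal H_\eps+V_\eps$ one has
\[
\left\|\left(\mathcal H_\eps-\tfrac{\mu_1}{\eps^2}\right)\phi_0^\eps\right\|_{\sL^2(\mathcal S)}\leq \tilde M_2 + \|V_\eps\|_{\sL^\infty}\|\phi_0^\eps\|_{\sL^2(\mathcal S)},
\]
and the right-hand side is uniformly bounded in $\eps$ since $V_\eps$ is uniformly bounded under Assumption \ref{assumption1}. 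Hence \eqref{ass3} holds, Theorem \ref{mainthmH2} applies, and the $\sL^2$ error bound transfers verbatim as above, giving the $\mathcal O(\eps)$ estimate. The only substantive check in this whole argument is the identification of the energies via the second form of \eqref{Energy} and the equivalence of the two operator norms just displayed; everything else is a direct transport of results.
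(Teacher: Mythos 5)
Your proposal is correct and follows exactly the route the paper intends: the corollary is presented there as an immediate consequence of Theorems \ref{mainthmH1} and \ref{mainthmH2}, obtained by transporting the statements through the unitary $\mathcal U_\eps$ and the phase $e^{-i\mu_1 t/\eps^2}$, which is precisely what you carry out (including the correct handling of the $V_\eps$ term when passing from $-\Delta_{\Omega_\eps}^\Dir$ to $\mathcal H_\eps$ for the $\sH^2$ case). The only cosmetic point is that your energy ``identity'' is off by the harmless term $\frac12\|\psi_0^\eps\|_{\sL^2(\Omega_\eps)}^2$ (the $\sH^1(\Omega_\eps)$ norm contains the $\sL^2$ norm), which is nonnegative and controlled by $M_0$, so the verification of Assumption \ref{assumption2} goes through unchanged.
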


\bigskip
This paper is organized as follows. Section \ref{S:prel} is devoted to technical lemmas related to the well-posedness of our Cauchy problems and to energy estimates. Section \ref{S:wp} deals with the proof of well-posedness stated in Theorems \ref{mainthmH1} and \ref{mainthmH2}. In Section \ref{S:red} we establish the tensorial approximation announced in Theorems \ref{mainthmH1} and \ref{mainthmH2}.

\section{Preliminaries}\label{S:prel}
In this section, we give a few technical results that will be useful in the sequel. 
\subsection{Norm equivalences}
Let us first remark that $$\mathcal P_{\eps,1}=(1-\eps x_2 \kappa(x_1))^{-1}D_{x_1}-\frac{\eps x_2 \kappa'(x_1)}{2(1-\eps x_2 \kappa(x_1))}.$$ Hence, by Assumption \ref{assumption1}, there exists three positive constants $C_1$, $C_2$, $C_3$ such that, for all $\eps\in (0,\eps_0)$ and for all $u\in \sH^1_0(\mathcal{S})$,
\begin{equation}
\label{equivnorm2}
\left(1-C_1\eps\right)\|\partial_{x_1}u\|_{\sL^2}\leq \|\mathcal P_{\eps,1}u\|_{\sL^2}+C_2\eps \|u\|_{\sL^2}\leq (1+C_3\eps)\|\partial_{x_1}u\|_{\sL^2}+C_3\eps\|u\|_{\sL^2}.
\end{equation}
Furthermore,  the graph norm of $\mathcal{H}_\eps$ is equivalent to the $\sH^2$ norm for all $\eps\in (0,\eps_0)$, with constants depending on $\eps$. More precisely, we have the following result.
\begin{lemma}
There exist positive constants $C_4$ and $C_5$ such that, for all $\eps\in (0,\eps_0)$ and for all $u\in \sH^2\cap\sH^1_0(\mathcal{S})$,
\begin{align}
&C_4\left(\left\|D_{x_1}^2u\right\|_{\sL^2}+\frac{1}{\eps^2}\left\|\left(D_{x_2}^2-\mu_1\right)u\right\|_{\sL^2}+\|u\|_{\sL^2}\right)\leq  \label{equivnorm}\\
&\qquad \leq \left\|\left(\mathcal{H}_\eps-\frac{\mu_1}{\eps^2}\right) u\right\|_{\sL^2}+\|u\|_{\sL^2}\leq C_5\left(\left\|D_{x_1}^2u\right\|_{\sL^2}+\frac{1}{\eps^2}\left\|\left(D_{x_2}^2-\mu_1\right)u\right\|_{\sL^2}+\|u\|_{\sL^2}\right). \nonumber
\end{align}
\end{lemma}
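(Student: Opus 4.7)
The strategy is to compare $\mathcal{H}_\eps-\mu_1/\eps^2$ with the decoupled model operator $\mathcal{L}_\eps:=D_{x_1}^2+\eps^{-2}(D_{x_2}^2-\mu_1)$, which has the same anisotropic scaling but acts independently in each variable.

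First I would expand $\mathcal{P}_{\eps,1}^2=m_\eps^{-1/2}D_{x_1}m_\eps^{-1}D_{x_1}m_\eps^{-1/2}$ by commuting $D_{x_1}$ past each multiplication operator $m_\eps^{-a}$. Since $m_\eps=1-\eps x_2\kappa(x_1)$, the derivatives $\partial_{x_1}m_\eps^{-a}$ and $\partial_{x_1}^2 m_\eps^{-a}$ are uniformly $O(\eps)$ on $\mathcal{S}$ by Assumption \ref{assumption1}, while $m_\eps^{-2}-1=O(\eps)$. A routine calculation then yields
$$\mathcal{P}_{\eps,1}^2=D_{x_1}^2+\eps\,\mathcal{R}_\eps,$$
where $\mathcal{R}_\eps$ is a second-order differential operator in $x_1$ with coefficients uniformly bounded in $\eps\in(0,\eps_0)$. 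Using the interpolation $\|D_{x_1}u\|_{\sL^2}^2=\langle D_{x_1}^2 u,u\rangle\leq\|D_{x_1}^2u\|_{\sL^2}\|u\|_{\sL^2}$, one obtains
$$\|(\mathcal{P}_{\eps,1}^2-D_{x_1}^2)u\|_{\sL^2}\leq C\eps\bigl(\|D_{x_1}^2u\|_{\sL^2}+\|u\|_{\sL^2}\bigr).$$
Writing $\mathcal{H}_\eps-\mu_1/\eps^2=\mathcal{L}_\eps+\eps\,\mathcal{R}_\eps$ and applying the triangle inequality immediately yields the upper bound in \eqref{equivnorm}.

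For the lower bound, the essential observation is that $D_{x_1}^2$ and $D_{x_2}^2-\mu_1$ are commuting non-negative self-adjoint operators on $\sL^2(\mathcal{S})$, the latter because $\mu_1$ is the bottom of the Dirichlet spectrum of $D_{x_2}^2$ on $(-1,1)$. Expanding $\|\mathcal{L}_\eps u\|_{\sL^2}^2$ and decomposing $u=\sum_k u_k(x_1)e_k(x_2)$ in the Dirichlet eigenbasis $\{e_k\}_k$ of $D_{x_2}^2$ shows that the cross term equals $2\eps^{-2}\sum_k(\mu_k-\mu_1)\|\partial_{x_1}u_k\|_{\sL^2(\M)}^2\geq 0$, so that
$$\|\mathcal{L}_\eps u\|_{\sL^2}\geq\frac{1}{\sqrt{2}}\left(\|D_{x_1}^2u\|_{\sL^2}+\eps^{-2}\|(D_{x_2}^2-\mu_1)u\|_{\sL^2}\right).$$
Combining this with the reverse triangle inequality $\|\mathcal{L}_\eps u\|_{\sL^2}\leq\|(\mathcal{H}_\eps-\mu_1/\eps^2)u\|_{\sL^2}+C\eps(\|D_{x_1}^2u\|_{\sL^2}+\|u\|_{\sL^2})$ and absorbing the $C\eps\|D_{x_1}^2u\|_{\sL^2}$ term into the left-hand side produces the lower bound for $\eps$ small enough.

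The only subtle point is this absorption step, which requires $\eps$ below a threshold. I would handle the complementary range $\eps\in[\eta,\eps_0)$ (for a fixed $\eta>0$) by standard $\sH^2$ elliptic regularity: on this compact interval the coefficients of $\mathcal{H}_\eps$ and its ellipticity constants are uniformly bounded, so both sides of the claimed equivalence are comparable to $\|u\|_{\sH^2}$ with constants depending only on $\eta$ and $\eps_0$. Taking $C_4,C_5$ to be the worst constants across the two regimes then yields the uniform bound stated in the lemma.
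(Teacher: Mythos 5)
Your proof is correct, and for the lower bound it takes a genuinely different (though closely related) route from the paper. The paper proves the left inequality in \eqref{equivnorm} by an energy-type elliptic estimate: it tests $f=(\mathcal{H}_\eps-\mu_1\eps^{-2})u$ against $D_{x_1}^2u$, uses $\eps^{-2}\|D_{x_1}(D_{x_2}^2-\mu_1)^{1/2}u\|_{\sL^2}^2\geq 0$ for the transverse part, and controls the longitudinal part through the commutator bound \eqref{commutateur} together with \eqref{equivnorm2}, everything staying at the level of quadratic forms. You instead work at the operator level: you introduce the decoupled model $\mathcal{L}_\eps=D_{x_1}^2+\eps^{-2}(D_{x_2}^2-\mu_1)$, obtain the positivity of the cross term by expanding in the Dirichlet eigenbasis of $D_{x_2}^2$ (which is the same positivity the paper exploits, just made explicit spectrally), write $\mathcal{P}_{\eps,1}^2-D_{x_1}^2=\eps\mathcal{R}_\eps$ with coefficients $O(1)$ thanks to Assumption \ref{assumption1}, and absorb the $C\eps\|D_{x_1}^2u\|_{\sL^2}$ error. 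Both arguments ultimately rest on the same two facts (commutativity/non-negativity of $D_{x_2}^2-\mu_1$, and the $O(\eps)$ size of $\mathcal{P}_{\eps,1}^2-D_{x_1}^2$ relative to $D_{x_1}^2+1$), so the content is equivalent; your version is arguably more transparent, while the paper's form argument avoids introducing the auxiliary operator. One point in your favour: the absorption (and in fact also the paper's factor $1-C\eps$) only works for $\eps$ below a threshold, and you address the remaining range $[\eta,\eps_0)$ explicitly by uniform elliptic regularity on a compact parameter range — this step is asserted rather than detailed, but it is standard, given that $m_\eps\geq 1-\eps_0\|\kappa\|_{\sL^\infty}>0$ and the coefficients are uniformly bounded there, so I see no gap.
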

\begin{proof}
To prove the left inequality in \eqref{equivnorm}, we use standard elliptic estimates. For $u\in \sH^2\cap\sH^1_0(\mathcal{S})$, we let
\begin{equation}
\label{f}
f=\left(\mathcal{H}_{\eps}-\frac{\mu_1}{\eps^2}\right)u=\mathcal{P}_{\eps,1}^2u+\eps^{-2}(D_{x_{2}}^2-\mu_{1})u\end{equation}
and taking the $\sL^2$ scalar product of $f$ with $D_{x_1}^2 u$, we get
$$\langle D_{x_{1}}\mathcal{P}_{\eps,1}^2u, D_{x_{1}}u\rangle_{\sL^2}+\eps^{-2} \|D_{x_1}\left(D^2_{x_2}-\mu_1\right)^{1/2} u\|_{\sL^2}^2\leq \|f\|_{\sL^2}\|D_{x_{1}}^2u\|_{\sL^2}.$$
Then we write
\begin{align*}
\langle D_{x_{1}}\mathcal{P}_{\eps,1}^2u, D_{x_{1}}u\rangle_{\sL^2}
&=\left\|\mathcal{P}_{\eps,1}D_{x_{1}}u\right\|^2_{\sL^2}+\langle [D_{x_{1}},\mathcal{P}_{\eps,1}]u, \mathcal{P}_{\eps,1}D_{x_{1}}u\rangle_{\sL^2}\\
&\quad -\langle\mathcal{P}_{\eps,1}u,  [D_{x_{1}},\mathcal{P}_{\eps,1}] D_{x_{1}}u\rangle_{\sL^2}
\end{align*}
and use
\begin{equation}
\label{commutateur}
\left\|[D_{x_{1}},\mathcal{P}_{\eps,1}]u\right\|_{\sL^2}\leq C\eps \left(\|D_{x_1}u\|_{\sL^2}+ \|u\|_{\sL^2}\right),
\end{equation}
together with \eqref{equivnorm2} and the interpolation estimate $\|D_{x_1}u\|_{\sL^2}\leq C\|D_{x_1}^2u\|_{\sL^2}^{1/2}\|u\|_{\sL^2}^{1/2}$, to get
$$\langle D_{x_{1}}\mathcal{P}_{\eps,1}^2u, D_{x_{1}}u\rangle_{\sL^2}\geq (1-C\eps)\|D_{x_{1}}^2u\|_{\sL^2}^2-C\eps\|u\|_{\sL^2}^2.$$
It follows that
$$\|D_{x_{1}}^2u\|_{\sL^2}\leq C\|f\|_{\sL^2}+C\|u\|_{\sL^2}$$
and then, using \eqref{f} and again \eqref{equivnorm2},
\begin{align*}
\eps^{-2}\left\|(D_{x_{2}}^2-\mu_{1})u\right\|_{\sL^2}\leq \|f\|_{\sL^2}+\|\mathcal{P}_{\eps,1}^2u\|_{\sL^2}&\leq \|f\|_{\sL^2}+C\|D_{x_1}^2u\|_{\sL^2}+C\|u\|_{\sL^2}\\
&\leq C\|f\|_{\sL^2}+C\|u\|_{\sL^2}.
\end{align*}
This proves the left inequality in \eqref{equivnorm}. The right inequality can be easily obtained by using Minkowski inequality, \eqref{equivnorm2} and \eqref{commutateur}.
\end{proof}

In the sequel, we shall denote by $C$ a generic constant independent of $\eps$, and by $C_\eps$ a generic constant that depends on $\eps$. Moreover, for two positive numbers $\alpha^\eps$ and $\beta^\eps$, the notation $\alpha^\eps\lesssim \beta^\eps$ means that there exists $C>0$ {\em independent of $\eps$} such that for all $\eps\in (0, \eps_0)$, $\alpha^\eps\leq C\beta^\eps$.

\subsection{Some estimates of $F$ and $W_{\eps}$}
In this subsection, we give some results concerning the two nonlinear functions $F$ and $W_{\eps}$ defined in \eqref{defF} and \eqref{Weps}.
\begin{lemma}
\label{lemF}
The function $F$ is locally Lipschitz continuous on $\sH^1(\M)$ and on $\sH^2(\M)$: for $k=1$ or $k=2$,
\begin{equation}
\label{lip1}
\forall u_{1},u_{2}\in \sH^k(\M),\quad\sup_{t\in \M}\|F(t;u_{1})-F(t;u_{2})\|_{\sH^k}\lesssim (\|u_{1}\|_{\sH^k}^2+\|u_{2}\|_{\sH^k}^2)\|u_{1}-u_{2}\|_{\sH^k}
\end{equation}
and, for all $\eps\in (0, \eps_0)$, the function $W_{\eps}$ is locally Lipschitz continuous on $\sH^2\cap \sH^1_{0}(\mathcal S)$: there exists $C_\eps>0$ such that 
\begin{equation}
\label{lip-Weps}
\forall u_{1},u_{2}\in \sH^2\cap \sH^1_0(\mathcal S),\quad \sup_{t\in \M}\|W_{\eps}(t;u_{1})-W_{\eps}(t;u_{2})\|_{\sH^2}\leq C_\eps (\|u_{1}\|_{\sH^2}^2+\|u_{2}\|_{\sH^2}^2)\|u_{1}-u_{2}\|_{\sH^2}.
\end{equation}
Moreover, for all $u\in \sH^2(\M)$, one has
\begin{equation}
\label{boundF}
\sup_{t\in \R}\|F(t;u)\|_{\sH^2}\lesssim \|u\|_{\sH^1}^2\|u\|_{\sH^2}.
\end{equation}
Moreover, for all $M>0$ and for all $\eps\in (0,\eps_0)$, there exists a constant $C_\eps(M)>0$ such that, for all $u\in \sH^2\cap \sH^1_0(\mathcal S)$ with $\|u\|_{\sH^1}\leq M$, one has
\begin{equation}
\label{bg}
\sup_{t\in \R}\|W_{\eps}(t;u)\|_{\sH^2}\leq C_\eps(M)\big(1+\log \left(1+\|u\|_{\sH^2}\right)\big)\|u\|_{\sH^2}.
\end{equation}
\end{lemma}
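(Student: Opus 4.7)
My strategy is to exploit the fact that both linear groups act isometrically on the relevant Sobolev scales, reducing the nonlinear estimates to purely algebraic bounds on $v\mapsto |v|^2v$. The map $e^{-itD_{x_1}^2}$ commutes with $D_{x_1}$ and hence preserves $\sH^k(\M)$ for every $k$. The map $e^{-it\mathcal{H}_\eps}$ commutes with any function of $\mathcal{H}_\eps$, so it preserves the graph norm of $\mathcal{H}_\eps-\mu_1/\eps^2$, which by \eqref{equivnorm} is equivalent to $\|\cdot\|_{\sH^2(\mathcal S)}$ with constants independent of $\eps$; interpolating with $\sL^2$-conservation also yields preservation of $\|\cdot\|_{\sH^1(\mathcal S)}$ up to $\eps$-uniform constants. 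Finally, multiplication by $m_\eps^{-1}$ is continuous on $\sH^2(\mathcal S)$ uniformly in $\eps$, since $m_\eps-1=O(\eps)$ and $m_\eps\geq 1-\eps_0\|\kappa\|_{\sL^\infty}>0$.

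For \eqref{lip1} and \eqref{boundF}, I would invoke the 1D embeddings $\sH^1(\M)\hookrightarrow L^\infty$ and $\sH^2(\M)\hookrightarrow W^{1,\infty}$. The Lipschitz bound at $k=1$ reduces, after the two unitaries cancel, to the pointwise estimate
\begin{equation*}
\left||v_1|^2v_1-|v_2|^2v_2\right|\leq C\,(|v_1|^2+|v_2|^2)\,|v_1-v_2|
\end{equation*}
together with its derivative counterpart produced by the Leibniz rule, each term being bounded by an $L^\infty$ factor times an $\sL^2$ factor. For $k=2$, I expand $D_{x_1}^2(|v|^2v)$ and bound each trilinear term either by $\|v\|_{L^\infty}^2\|D_{x_1}^2v\|_{\sL^2}$ or by $\|v\|_{L^\infty}\|D_{x_1}v\|_{L^\infty}\|D_{x_1}v\|_{\sL^2}$; both fit under $\|v\|_{\sH^1}^2\|v\|_{\sH^2}$, which immediately delivers \eqref{boundF} as well.

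For \eqref{lip-Weps}, I run the analogous 2D expansion using the embeddings $\sH^2(\mathcal S)\hookrightarrow L^\infty\cap W^{1,4}$, valid on the fixed domain $\mathcal S$. Every cubic term in $D^2\bigl(m_\eps^{-1}|\phi_1|^2\phi_1-m_\eps^{-1}|\phi_2|^2\phi_2\bigr)$ is then bounded by $(\|\phi_1\|_{\sH^2}^2+\|\phi_2\|_{\sH^2}^2)\|\phi_1-\phi_2\|_{\sH^2}$, and transferring back through $e^{\pm it\mathcal H_\eps}$ only costs an $\eps$-dependent constant from the graph-norm equivalence of non-trivial multiplication operators, which is acceptable since \eqref{lip-Weps} allows $C_\eps$.

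The delicate estimate is \eqref{bg}: the improvement over \eqref{lip-Weps} is that, at fixed $\sH^1$ bound, the cubic in $\sH^2$ must be replaced by a merely logarithmic factor. The plan is to invoke the 2D Brezis--Gallouet inequality
\begin{equation*}
\|\phi\|_{L^\infty(\mathcal S)}\leq C\,\|\phi\|_{\sH^1}\left(1+\log\bigl(1+\|\phi\|_{\sH^2}/\|\phi\|_{\sH^1}\bigr)\right)^{1/2}.
\end{equation*}
Setting $\phi=e^{-it\mathcal H_\eps}u$ and using the group-norm properties above, one has $\|\phi\|_{\sH^1}\lesssim \|u\|_{\sH^1}\leq M$ and $\|\phi\|_{\sH^2}\lesssim \|u\|_{\sH^2}$, uniformly in $t$ and $\eps$. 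Squaring Brezis--Gallouet turns $\sqrt{\log}$ into $\log$, so the leading term $\|\phi\|_{L^\infty}^2\|\phi\|_{\sH^2}$ in the expansion of $\|m_\eps^{-1}|\phi|^2\phi\|_{\sH^2}$ produces exactly the right-hand side of \eqref{bg}. The remaining Leibniz terms, typically of the form $\|\phi\|_{L^\infty}\|\nabla\phi\|_{L^4}^2$, are controlled by 2D Gagliardo--Nirenberg $\|\nabla\phi\|_{L^4}^2\lesssim \|\phi\|_{\sH^1}\|\phi\|_{\sH^2}$ and fit under the same logarithmic envelope. The technical point to watch is that each second-derivative term must be arranged to carry at most a single $\|\phi\|_{\sH^2}$ factor, so that the logarithm appears precisely once; this is where I expect the bookkeeping to demand the most care.
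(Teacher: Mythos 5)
Your proposal is correct and follows essentially the same route as the paper: reduce via the unitarity of $e^{-itD_{x_1}^2}$ and $e^{-it\mathcal H_\eps}$ to estimating the cubic $v\mapsto|v|^2v$, use the one- and two-dimensional Sobolev embeddings ($\sH^1(\M)\hookrightarrow\sL^\infty$, $\sH^2(\mathcal S)\hookrightarrow\sL^\infty\cap\sW^{1,4}$) for \eqref{lip1}, \eqref{boundF}, \eqref{lip-Weps}, and combine Br\'ezis--Gallouet with a Gagliardo--Nirenberg bound on $\|\nabla\phi\|_{\sL^4}^2$ for \eqref{bg}. The only inaccuracy is your claim that the graph norm of $\mathcal H_\eps-\mu_1/\eps^2$ (and the associated $\sH^1$ control) is equivalent to the $\sH^2(\mathcal S)$ norm with $\eps$-independent constants --- by \eqref{equivnorm} this equivalence is only $\eps$-dependent --- but this is harmless here since \eqref{lip-Weps} and \eqref{bg} allow constants $C_\eps$, exactly as in the paper.
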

\begin{proof}
We recall that the group $e^{-i\tau D^2_{x_{1}}}$ is unitary in $\sL^2(\M)$, $\sH^1(\M)$, $\sH^2(\M)$. Moreover, the group $e^{-i\tau \mathcal{H}_{\eps}}$  is unitary on $\sL^2(\mathcal S)$,  $\sH^1_{0}(\mathcal S)$ and $\sH^2(\mathcal S)\cap \sH^1_{0}(\mathcal S)$, if these two last spaces are respectively equipped with the norms $\|(\mathcal H_\eps u)^{1/2}\|_{\sL^2}$ and $\|\mathcal H_\eps u\|_{\sL^2}$, which are equivalent to the $\sH^1$ and $\sH^2$ norms with $\eps$-dependent constants, by \eqref{equivnorm}.

Let us prove \eqref{lip1}. We let $v_{j}=e^{-itD_{x_{1}}^2}u_{j}$. We have 
\begin{equation}\label{lipF}
 e^{-itD_{x_{1}}^2}(F(t ; u_{1})-F(t ; u_{2}))=|v_{1}|^2 v_{1}-|v_{2}|^2 v_{2}=(|v_{2}|^2+v_{1}\overline{v_{2}})(v_{1}-v_{2})+v_{1}^2(\overline{v_{1}}-\overline{v_{2}}).
\end{equation}
Then we have 
$$\|F(t ; u_{1})-F(t ; u_{2})\|_{\sH^k}\leq \||v_{2}|^2(v_{1}-v_{2})\|_{\sH^k}+\|v_{1}\overline{v_{2}}(v_{1}-v_{2})\|_{\sH^k}+\|\overline{v_{1}}^2(v_{1}-v_{2})\|_{\sH^k}.$$
We are led to estimate products of functions in $\sH^k$ in the form $v_{1} v_{2} v_{3}$ so that, by using the Sobolev embedding $\sH^1(\M)\hookrightarrow \sL^{\infty}(\M)$, we get for all $k\geq 1$
$$\|v_{1} v_{2} v_{3}\|_{\sH^k}\lesssim \|v_{1}\|_{\sH^k}\|v_{2}\|_{\sH^k}\|v_{3}\|_{\sH^k}.$$
Let us deal with \eqref{lip-Weps}. Here we let $v_{j}=e^{-it\mathcal{H}_{\eps}}u_{j}$ and we estimate
$$\|W_{\eps}(t; u_{1})-W_{\eps}(t; u_{2})\|_{\sH^2}\leq C_\eps \|m_{\eps}^{-1}(|v_{1}|^2v_{1}-|v_{2}|^2v_{2})\|_{\sH^2}\leq C_\eps' \||v_{1}|^2v_{1}-|v_{2}|^2v_{2}\|_{\sH^2}$$
where we have used the unitarity of $e^{-it\mathcal{H}_{\eps}}$ for the graph norm of $\mathcal{H}_{\eps}$. Then, the conclusion follows by using the embeddings $\sH^2(\mathcal S)\hookrightarrow \sL^{\infty}(\mathcal S)$ and $\sH^2(\mathcal{S})\hookrightarrow \sW^{1,4}(\mathcal{S})$. Let us now deal with \eqref{boundF}. We notice that, for all $u\in \sH^2(\M)$,
$$\|F(t; u)\|_{\sH^2}\lesssim \||v|^2v\|_{\sH^2}, \quad v=e^{-itD_{x_{1}}^2}u$$
and 
\begin{align*}
\||v|^2v\|_{\sH^2}&\lesssim \||v|^2v\|_{\sL^2}+\|v'^2 v\|_{\sL^2}+\|v''v^2\|_{\sL^2}\\
                                    &\lesssim \|v\|_{\sH^2}\|v\|^2_{\sH^1}+\|v'\|_{\sL^2}\|v'\|_{\sL^\infty}\|v\|_{\sL^\infty}+\|v\|^2_{\sL^\infty}\|v\|_{\sH^2}\\
                                    &\lesssim \|v\|_{\sH^1}^2\|v\|_{\sH^2}= \|u\|_{\sH^1}^2\|u\|_{\sH^2}.
\end{align*}
Let us now deal with \eqref{bg}. We first recall the Gagliardo-Nirenberg inequality in dimension 2 (see \cite[p. 129]{Nir59}):
\begin{equation}\label{GN}
\|v\|_{\sW^{1,4}}^2\lesssim\|v\|_{\sL^\infty}\|v\|_{\sH^2}.
\end{equation}
The next Sobolev inequality is due to Br\'ezis and Gallouet (see \cite[Lemma 2]{BG80}): there exists $C(M)>0$ such that, for all $v\in \sH^2(\R^2)$ with $\|v\|_{\sH^1(\R^2)}\leq M$,
\begin{equation}
\label{soboBG}
\|v\|_{\sL^\infty}\leq C(M)\left(1+\sqrt{\log(1+\|v\|_{\sH^2})}\right).
\end{equation}
By using continuous extensions from $\sH^2(\mathcal S)$ to $\sH^2(\R^2)$, one obtains the same inequality for $u\in \sH^2\cap \sH^1_0(\mathcal S)$. Hence, for all $v\in \sH^2(\mathcal S)$ with $\|v\|_{\sH^1}\leq M$,
\begin{align*}
\||v|^2v\|_{\sH^2}\lesssim \|v^3\|_{\sL^2}+\|\Delta (v^3)\|_{\sL^2}&\lesssim \|v\|_{\sL^6}^3+\|v^2\Delta v\|_{\sL^2}+\|v |\nabla v|^2\|_{\sL^2}\\
&\lesssim \|v\|_{\sH^1}^3+ \|v\|_{\sL^\infty}^2\|\Delta v\|_{\sL^2}+\|v\|_{L^\infty}\|v\|_{\sW^{1,4}}^2\\
&\lesssim C(M)\left(1+\log(1+\|v\|_{\sH^2})\right)\|v\|_{\sH^2},
\end{align*}
where we used the Sobolev embedding $\sH^1(\mathcal S)\hookrightarrow \sL^6(\mathcal S)$, \eqref{GN} and \eqref{soboBG}. Finally, for all $u\in \sH^2\cap \sH^1_0 (\mathcal S)$ with $\|u\|_{\sH^1}\leq M$, setting $v=e^{-it\mathcal{H}_{\eps}}u$ we get $\|v\|_{\sH^1}\leq C_\eps M$ and 
\begin{align*}
\|W_{\eps}(t;u)\|_{\sH^2}\leq C_\eps \||v|^2v\|_{\sH^2}&\leq C_\eps(M)\left(1+\log(1+\|v\|_{\sH^2})\right)\|v\|_{\sH^2}\\
                                                                                             &\leq C_\eps' (M)\left(1+\log(1+\|u\|_{\sH^2})\right)\|u\|_{\sH^2}.
\end{align*}
This proves \eqref{bg} and the proof of the lemma is complete.
\end{proof}

\subsection{Proof of Lemma \ref{cauchy-tensorial}}
We will need the following easy lemma.
\begin{lemma}\label{Sob-anis}
For all $u\in\sH^1(\M)$, we have
\begin{equation}
\label{sobo1D}\|u\|_{\sL^4}^4\leq 2\|u\|_{\sL^2}^3\|u'\|_{\sL^2}.
\end{equation}
For all $u\in\sH^1(\mathcal{S})$, we have
\begin{equation}
\label{sobo2D}
\|u\|_{\sL^4}^4\leq 4\|u\|^2_{\sL^2(\mathcal{S})} \|\partial_{x_{1}}u\|_{\sL^2(\mathcal{S})} \|\partial_{x_{2}}u\|_{\sL^2(\mathcal{S})}.
\end{equation}
\end{lemma}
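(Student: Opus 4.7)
The plan is to derive both inequalities from the same one-dimensional pointwise bound obtained via the fundamental theorem of calculus and Cauchy--Schwarz.

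For \eqref{sobo1D}, taking $\M=\R$, I would start from the identity
$$|u(x)|^2 = 2\,\Re\int_{-\infty}^{x}\bar u(s)u'(s)\,ds,$$
which is valid since $u\in\sH^1(\R)$ decays at infinity. Applying Cauchy--Schwarz to the right-hand side yields the pointwise estimate $\|u\|_{\sL^\infty}^2\leq 2\|u\|_{\sL^2}\|u'\|_{\sL^2}$, and then
$$\|u\|_{\sL^4}^4 \leq \|u\|_{\sL^\infty}^2\|u\|_{\sL^2}^2 \leq 2\|u\|_{\sL^2}^3\|u'\|_{\sL^2}.$$

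For \eqref{sobo2D}, the idea is to apply this trick once in each coordinate. Fixing $x_2$ and running the one-dimensional argument in $x_1$ gives
$$|u(x_1,x_2)|^2 \leq 2\int_\M |u(s,x_2)||\partial_{x_1}u(s,x_2)|\,ds,$$
and fixing $x_1$ and using the vanishing of $u$ at $x_2=\pm 1$ (the natural setting being $u\in\sH^1_0(\mathcal S)$, in line with the downstream applications) gives
$$|u(x_1,x_2)|^2\leq 2\int_{-1}^1 |u(x_1,s)||\partial_{x_2}u(x_1,s)|\,ds.$$
Multiplying these two pointwise inequalities and integrating over $\mathcal S$, the integrand factors as a product of a function of $x_2$ alone times a function of $x_1$ alone, so Fubini splits the double integral into the product
$$4\left(\int_{-1}^1\!\!\int_\M |u||\partial_{x_1}u|\,dx_1 dx_2\right)\left(\int_\M\!\!\int_{-1}^1 |u||\partial_{x_2}u|\,dx_2 dx_1\right),$$
and two applications of Cauchy--Schwarz on $\mathcal S$ close the bound as $4\|u\|_{\sL^2}^2\|\partial_{x_1}u\|_{\sL^2}\|\partial_{x_2}u\|_{\sL^2}$.

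No serious obstacle is expected: both bounds follow directly from the fundamental theorem of calculus and Cauchy--Schwarz. The only point requiring care is that the one-dimensional pointwise argument needs $u$ to vanish at the relevant boundary; for \eqref{sobo1D} this is automatic on $\R$ via decay at infinity, and for \eqref{sobo2D} the second application in $x_2$ uses the Dirichlet condition at $x_2=\pm 1$, which is consistent with the functions $\phi_0^\eps\in\sH_0^1(\mathcal S)$ to which these inequalities are actually applied.
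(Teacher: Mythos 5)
Your proof is correct and follows essentially the same argument as the paper: for \eqref{sobo1D} the standard one--dimensional bound $\|u\|_{\sL^\infty}^2\leq 2\|u\|_{\sL^2}\|u'\|_{\sL^2}$ combined with $\|u\|_{\sL^4}^4\leq\|u\|_{\sL^\infty}^2\|u\|_{\sL^2}^2$, and for \eqref{sobo2D} the classical Ladyzhenskaya-type argument (fundamental theorem of calculus in each variable, multiplication of the two pointwise bounds, Fubini, then Cauchy--Schwarz). The one substantive difference is the treatment of the lateral boundary: the paper first proves $\int_{\mathcal S}|f|^2\dx x_1\dx x_2\leq\|\partial_{x_1}f\|_{\sL^1(\mathcal S)}\|\partial_{x_2}f\|_{\sL^1(\mathcal S)}$ for $f\in\sW^{1,1}(\mathcal S)$ by reducing ``by density and extension'' to $f\in\mathcal C^\infty_0(\R^2)$ and then applies it to $f=u^2$, so it asserts \eqref{sobo2D} for every $u\in\sH^1(\mathcal S)$, whereas you work directly with $u$ and replace the extension step by the Dirichlet trace at $x_2=\pm1$, proving the estimate for $u\in\sH^1_0(\mathcal S)$ only. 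This restriction is not a defect of your argument but in fact the sharper reading of the lemma: some vanishing at $x_2=\pm1$ (or a lower-order term) is genuinely needed, since a nonzero $u$ depending only on $x_1$ makes the right-hand side of \eqref{sobo2D} vanish while the left-hand side does not, and in every application in the paper (to $(\mathsf{Id}-\Pi_1)\phy$ with $\phy\in\sH^1_0(\mathcal S)$) the function does vanish at $x_2=\pm1$. Similarly, your explicit restriction to $\M=\R$ in \eqref{sobo1D} sidesteps the case of constants on $\T$, for which the stated inequality degenerates; if you want the statement in the generality in which it is later invoked, record the variants with harmless lower-order terms rather than trying to prove the unrestricted form.
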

\begin{proof}
The proof of \eqref{sobo1D} is a consequence of the standard inequality, for $f\in\sH^1(\M)$, $\|f\|^2_{\sL^\infty}\leq 2\|f\|_{\sL^2}\|f'\|_{\sL^2}$.
To prove \eqref{sobo2D}, let us recall the following inequality
$$\int_{\mathcal{S}} |f|^2 \dx x_{1} \dx x_{2}\leq \|\dr_{x_{1}}f\|_{\sL^1(\mathcal{S})} \|\dr_{x_{2}}f\|_{\sL^1(\mathcal{S})},\quad\forall f\in\sW^{1,1}(\mathcal{S}).$$
Indeed, by density and extension, we may assume that $f\in\mathcal{C}^\infty_{0}(\R^2)$ and we can write
$$f(x_{1}, x_{2})=\int_{-\infty}^{x_{1}} \partial_{x_{1}}f(u,x_{2}) \dx u,\qquad f(x_{1}, x_{2})=\int_{-\infty}^{x_{2}} \partial_{x_{2}}f(x_{1},v) \dx v.$$
We get
$$|f(x_{1}, x_{2})|^2\leq \left(\int_{\R} |\partial_{x_{1}}f(u,x_{2})|\dx u\right)\left(\int_{-1}^1 |\partial_{x_{2}}f(x_{1},v)|\dx v\right)$$
and it remains to integrate with respect to $x_{1}$ and $x_{2}$.
We apply this inequality to $f=u^2$, use the Cauchy-Schwarz inequality and \eqref{sobo2D} follows.
\end{proof}
Now, we prove a technical lemma on the energy functional.
\begin{lemma}
\label{lemma-energy}
There exists $\eps_2\in (0,\eps_0)$ such that, for all $\eps\in (0,\eps_2)$, the energy functional defined by \eqref{Energy} satisfies the following estimate. For all $M>0$, there exists $C_0>0$ such that, for all $\phy\in \sH^1_0(\mathcal S)$ with $\|\phy\|_{\sL^2}\leq M$, one has
\begin{equation}
\label{minorE}
\mathcal{E}_{\eps}(\phy)\geq \frac{1}{4}\|\partial_{x_1}\phy\|_{\sL^2(\mathcal S)}^2+\left(\frac{3}{8\eps^2}-C_0M^4\right)\|\partial_{x_2}(\mathsf{Id}-\Pi)\phy\|^2_{\sL^2(\mathcal S)}-C_0M^2-C_0M^6.
\end{equation}
\end{lemma}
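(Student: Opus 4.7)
The plan is to decompose the energy into four pieces and estimate them separately:
\[
\mathcal E_\eps(\phy)=\underbrace{\tfrac12\|\mathcal P_{\eps,1}\phy\|^2}_{A}+\underbrace{\tfrac{1}{2\eps^2}\|\partial_{x_2}\phy\|^2-\tfrac{\mu_1}{2\eps^2}\|\phy\|^2}_{B}+\underbrace{\tfrac12\int V_\eps|\phy|^2}_{C}+\underbrace{\tfrac{\lambda}{4}\int m_\eps^{-1}|\phy|^4}_{D}.
\]
The target lower bound has a $\tfrac{1}{4}\|\partial_{x_1}\phy\|^2$ coming from $A$ with a bit of slack left over, a $\tfrac{3}{8\eps^2}\|\partial_{x_2}(\mathsf{Id}-\Pi)\phy\|^2$ coming from $B$, a harmless $-CM^2$ coming from $C$, and a controlled loss from $D$.

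For $A$, I apply the norm equivalence \eqref{equivnorm2}: starting from $(1-C_1\eps)\|\partial_{x_1}\phy\|\leq\|\mathcal P_{\eps,1}\phy\|+C_2\eps\|\phy\|$, I square using the Young inequality $(a+b)^2\leq(1+\eps)a^2+(1+\eps^{-1})b^2$, to obtain $\tfrac12\|\mathcal P_{\eps,1}\phy\|^2\geq \tfrac{1-C\eps}{2}\|\partial_{x_1}\phy\|^2-C\eps M^2$, and then split this as $\tfrac14\|\partial_{x_1}\phy\|^2+\tfrac18\|\partial_{x_1}\phy\|^2-C\eps M^2$ provided $\eps\leq\eps_2$ is small enough. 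For $B$, I write $\phy=\Pi\phy+(\mathsf{Id}-\Pi)\phy$; since $\phy\in\sH^1_0(\mathcal S)$ and $e_1$ vanishes at $x_2=\pm1$, both pieces lie in $\sH^1_0$ in the $x_2$ variable, so integration by parts together with the eigenvalue equation $-e_1''=\mu_1 e_1$ gives the Pythagorean identity $\|\partial_{x_2}\phy\|^2=\mu_1\|\Pi\phy\|^2+\|\partial_{x_2}(\mathsf{Id}-\Pi)\phy\|^2$. Combined with $\|\phy\|^2=\|\Pi\phy\|^2+\|(\mathsf{Id}-\Pi)\phy\|^2$ and the spectral gap $\|\partial_{x_2}(\mathsf{Id}-\Pi)\phy\|^2\geq\mu_2\|(\mathsf{Id}-\Pi)\phy\|^2$ with $\mu_2=\pi^2=4\mu_1$, this yields $B\geq\tfrac{1}{2\eps^2}(1-\tfrac{\mu_1}{\mu_2})\|\partial_{x_2}(\mathsf{Id}-\Pi)\phy\|^2=\tfrac{3}{8\eps^2}\|\partial_{x_2}(\mathsf{Id}-\Pi)\phy\|^2$, which is precisely the announced coefficient. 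For $C$, Assumption \ref{assumption1} together with $m_\eps\geq 1-\eps_0\|\kappa\|_\infty>0$ gives $\|V_\eps\|_{\sL^\infty}\leq C$, so $|C|\leq CM^2$.

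The main obstacle is to extract enough from $D$ when $\lambda<0$ (when $\lambda\geq 0$, $D\geq 0$ and the estimate is immediate). I bound $|D|\lesssim \|\phy\|_{\sL^4}^4$ by the anisotropic Gagliardo–Nirenberg inequality \eqref{sobo2D}:
\[
\|\phy\|_{\sL^4}^4\leq 4 M^2\|\partial_{x_1}\phy\|\,\|\partial_{x_2}\phy\|.
\]
Using again the Pythagorean identity, $\|\partial_{x_2}\phy\|\leq\sqrt{\mu_1}M+\|\partial_{x_2}(\mathsf{Id}-\Pi)\phy\|$, so $|D|$ is bounded by $CM^3\|\partial_{x_1}\phy\|+CM^2\|\partial_{x_1}\phy\|\|\partial_{x_2}(\mathsf{Id}-\Pi)\phy\|$. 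Young's inequality applied to each term with coefficient $\tfrac{1}{16}$ for the $\|\partial_{x_1}\phy\|^2$ side produces a total loss of $\tfrac18\|\partial_{x_1}\phy\|^2+CM^6+CM^4\|\partial_{x_2}(\mathsf{Id}-\Pi)\phy\|^2$. This is exactly what the reserve slack $\tfrac18\|\partial_{x_1}\phy\|^2$ from $A$ and the $O(\eps^{-2})$ dominance in $B$ can absorb. Summing $A+B+C+D$, absorbing $C\eps M^2$ into $C_0M^2$, and choosing $\eps_2$ small enough for all the preceding inequalities to hold yields \eqref{minorE} with $C_0$ the maximum of the constants produced along the way.
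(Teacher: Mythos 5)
Your proof is correct and follows essentially the same strategy as the paper: the same decomposition of $\mathcal E_\eps$, the spectral-gap bound \eqref{spectral-gap} to produce the $\frac{3}{8\eps^2}$ coefficient, the norm equivalence \eqref{equivnorm2} together with $\|V_\eps\|_{\sL^\infty}\leq C$ and $m_\eps^{-1}\leq 1+C\eps$, and absorption of the quartic losses by Young's inequality. The only (minor) difference is in the treatment of $\|\phy\|_{\sL^4}^4$: the paper first splits $\phy$ into $\Pi_1\phy$ and $(\mathsf{Id}-\Pi_1)\phy$ and uses the one-dimensional inequality \eqref{sobo1D} on the projected part and the anisotropic inequality \eqref{sobo2D} on the remainder, whereas you apply \eqref{sobo2D} directly to $\phy$ and then use $\|\partial_{x_2}\Pi_1\phy\|_{\sL^2}=\sqrt{\mu_1}\,\|\Pi_1\phy\|_{\sL^2}\leq \sqrt{\mu_1}M$ to isolate $\|\partial_{x_2}(\mathsf{Id}-\Pi_1)\phy\|_{\sL^2}$; this is slightly more economical but produces the same structure of losses ($\tfrac18\|\partial_{x_1}\phy\|^2+CM^6+CM^4\|\partial_{x_2}(\mathsf{Id}-\Pi_1)\phy\|^2$) and hence the same conclusion \eqref{minorE}.
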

\begin{proof}
Remark that
\begin{align*}
\mathcal{E}_{\eps}(\phy)&=\frac{1}{2}\int_{\mathcal{S}}|\mathcal{P}_{\eps,1}\phy|^2 \dx x_{1} \dx x_{2}+\frac{1}{2\eps^2}\left \langle \left(D_{x_2}^2-\mu_1\right)\phy, \phy\right\rangle_{\sL^2}+\frac{1}{2}\int_{\mathcal{S}}V_{\eps}|\phy|^2 \dx x_{1} \dx x_{2}\\
&+\frac{\lambda}{4}\int_{\mathcal{S}}m_{\eps}^{-1}|\phy|^4 \dx x_{1} \dx x_{2}.
\end{align*}
Next, recalling that $\Pi_1$ denotes the projection on the first eigenfunction $e_1$ of $D_{x_2}^2$, we easily get
$$\|\phy\|_{\sL^4(\mathcal{S})}^4\leq 8\|\Pi_1 \phy\|^4_{\sL^4(\mathcal{S})}+8\|(\mathsf{Id}-\Pi_1) \phy\|^4_{\sL^4(\mathcal{S})}.$$
We may write $\Pi_1\phy(x_{1}, x_{2})=\theta(x_{1}) e_1(x_{2})$ so that, with \eqref{sobo1D},
\begin{align}
\|\Pi_1\phy\|_{\sL^4(\mathcal{S})}^4&=\gamma \int_{\M} \theta(x_{1})^4 \dx x_{1}\leq 2\gamma \|\theta\|_{\sL^2(\M)}^3\|\theta'\|_{\sL^2(\M)}=2\gamma \|\Pi_1\phy\|_{\sL^2(\mathcal{S})}^3\|\partial_{x_{1}}(\Pi_1\phy)\|_{\sL^2(\mathcal{S})}\nonumber\\
&\leq 2\gamma\|\phy\|_{\sL^2(\mathcal{S})}^3\|\partial_{x_{1}}(\Pi_1\phy)\|_{\sL^2(\mathcal{S})}\label{normL4}
\end{align}
where $\gamma=\int_{-1}^1 e_1(x_{2})^4 \dx x_{2}$, and thus, for all $\eta\in(0,1)$,
$$\|\Pi_1\phy\|_{\sL^4(\mathcal{S})}^4\leq \eta \|\Pi_1 \partial_{x_{1}}\phy\|_{\sL^2(\mathcal{S})}^2+\eta^{-1}\gamma^2\|\phy\|_{\sL^2(\mathcal{S})}^6.$$
Moreover, thanks to \eqref{sobo2D}, we have, for all $\eta\in(0,1)$,
\begin{align}
\|(\mathsf{Id}-\Pi_1) \phy\|^4_{\sL^4(\mathcal{S})}&\leq 4\|\phy\|^2_{\sL^2(\mathcal{S})} \|\partial_{x_{1}}(\mathsf{Id}-\Pi_1)\phy\|_{\sL^2(\mathcal{S})} \|\partial_{x_{2}}(\mathsf{Id}-\Pi_1)\phy\|_{\sL^2(\mathcal{S})}\nonumber\\
&\leq \eta\|\partial_{x_{1}}(\mathsf{Id}-\Pi_1)\phy\|^2_{\sL^2(\mathcal{S})}+4 \eta^{-1}\|\phy\|^4_{\sL^2(\mathcal{S})}\|\partial_{x_{2}}(\mathsf{Id}-\Pi_1)\phy\|^2_{\sL^2(\mathcal{S})}.\label{lun1}
\end{align}
Now we remark that, if $\mu_2=\pi^2$ denotes the second eigenvalue of $D_{x_2}^2$ on $(-1,1)$ with Dirichlet boundary conditions, we have
\begin{equation}\label{spectral-gap}
\left \langle \left(D_{x_2}^2-\mu_1\right)\phy, \phy\right\rangle_{\sL^2(\mathcal S)}\geq \left(1-\frac{\mu_1}{\mu_2}\right)\left\|\partial_{x_2}(\mathsf{Id}-\Pi_1)\phy\right\|_{\sL^2(\mathcal S)}^2=\frac{3}{4}\left\|\partial_{x_2}(\mathsf{Id}-\Pi_1)\phy\right\|_{\sL^2(\mathcal S)}^2.
\end{equation}
Therefore, using \eqref{equivnorm2}, \eqref{lun1}, \eqref{spectral-gap}, using that $\|V_\eps\|_{\sL^\infty}\leq C$ and that $0\leq m_\eps^{-1}\leq 1+C\eps$, we obtain
\begin{align*}
\mathcal{E}_{\eps}(\phy)&\geq \frac{1}{2}(1-C\eps)\|\partial_{x_1}\phy\|_{\sL^2(\mathcal S)}^2-C\|\phy\|_{\sL^2(\mathcal S)}^2+\frac{3}{8\eps^2}\left\|\partial_{x_2}(\mathsf{Id}-\Pi_1)\phy\right\|_{\sL^2(\mathcal S)}^2\\
&\quad -2|\lambda|(1+C\eps)\left(\eta \|\partial_{x_1}\phy\|_{\sL^2(\mathcal S)}^2+4 \eta^{-1}\|\phy\|^4_{\sL^2(\mathcal{S})}\left\|\partial_{x_2}(\mathsf{Id}-\Pi_1)\phy\right\|_{\sL^2(\mathcal S)}^2\right)-C\|\phy\|_{\sL^2(\mathcal S)}^6\\
&\geq \frac{1}{4}\|\partial_{x_1}\phy\|_{\sL^2(\mathcal S)}^2+\left(\frac{3}{8\eps^2}-C\|\phy\|^4_{\sL^2(\mathcal{S})}\right)\left\|\partial_{x_2}(\mathsf{Id}-\Pi_1)\phy\right\|_{\sL^2(\mathcal S)}^2-C\|\phy\|_{\sL^2(\mathcal S)}^2-C\|\phy\|_{\sL^2(\mathcal S)}^6
\end{align*}
where we has chosen $\eta=\frac{1-2C\eps}{8|\lambda|(1+C\eps)}$, which is positive for $\eps$ small enough.
\end{proof}

\begin{proofof}{Lemma \ref{cauchy-tensorial}}
It is easy now to deduce Lemma \ref{cauchy-tensorial} from Lemma \ref{lemma-energy}. Indeed, consider a sequence $\phi_0^\eps$ satisfying Assumption \ref{assumption2} and introduce the constants
\begin{equation}
\label{eps1}
\eps_1(M_0)=\min\left(\eps_2,\left(\frac{3}{16C_0M_0^4}\right)^{1/2}\right).
\end{equation}
We deduce from \eqref{minorE} that, if $\eps\in (0,\eps_1(M_0))$, we have
\begin{align}
\frac{3}{16}\left(\|\partial_{x_1}\phi_0^\eps\|_{\sL^2}^2+\frac{1}{\eps^2}\|\partial_{x_2}(\mathsf{Id}-\Pi_{1})\phi_0^\eps\|^2_{\sL^2}\right)&\leq  \frac{1}{4}\|\partial_{x_1}\phi_0^\eps\|_{\sL^2}^2+\left(\frac{3}{8\eps^2}-C_0M^4\right)\|\partial_{x_2}(\mathsf{Id}-\Pi_{1})\phi_0^\eps\|^2_{\sL^2}\nonumber\\
&\leq \mathcal{E}_{\eps}(\phi_0^\eps)+C_0M_0^2+C_0M_0^6\nonumber\\
&\leq M_1+C_0M_0^2+C_0M_0^6.\label{i1}
\end{align}
The conclusion \eqref{conf} stems from \eqref{i1} by remarking also that
\begin{equation*}
\|\partial_{x_2}\Pi_{1} \phi_0^\eps\|_{\sL^2}=\|\langle \phi_0^\eps, e_1\rangle_{\sL^2((-1,1))} \partial_{x_2}e_1\|_{\sL^2}\leq \frac{\pi}{2}\|\phi_0^\eps\|_{\sL^2}\leq \frac{\pi}{2}M_0
\end{equation*}
and by using the Poincar\'e inequality
\begin{equation*}
\|(\mathsf{Id}-\Pi_{1})\phi_0^\eps\|_{\sL^2(\M,\sH^1(-1,1))}\leq \frac{\sqrt{1+\pi^2}}{\pi}\|\partial_{x_2}(\mathsf{Id}-\Pi_{1})\phi_0^\eps\|_{\sL^2}.
\end{equation*}
\end{proofof}

\section{Well-posedness of the Cauchy problems}\label{S:wp}

\subsection{Limit equation}
The aim of this subsection is to prove briefly the global well-posedness of the limit equation \eqref{limit-equation}. 
\begin{proposition}\label{wp-limit}
Let $\theta_0\in \sH^1(\M)$. Then \eqref{limit-equation} with the Cauchy data $\theta_0$ admits a unique global solution $\theta\in C(\R_+;\sH^1(\M))\cap C^1(\R_+;\sH^{-1}(\M))$, that satisfies the following conservation laws
\begin{align}
\|\theta(t;\cdot)\|_{\sL^2}&=\|\theta_{0}\|_{\sL^2}\quad \mbox{(mass)},\label{consmass}\\
E(\theta(t;\cdot))&=E(\theta_{0})\quad \mbox{(nonlinear energy)},\label{consenergy}
\end{align}
where
$$E(\theta)=\frac{1}{2}\int_{\M}\left( |\dr_{x_{1}}\theta|^2-\frac{\kappa(x_{1})^2}{4}|\theta|^2\right)\dx x_{1}+\lambda \frac{\gamma}{4}\int_{\M}\left|\theta\right|^4 \dx x_{1}.$$
Moreover, there exists a constant $C>0$ such that
\begin{equation}\label{boundH1-theta}
\forall t\in [0,\R_+),\qquad \|\theta(t)\|_{\sH^1}\leq C(\|\theta_0\|_{\sH^1}+\|\theta_0\|_{\sH^1}^2)
\end{equation}
If $\theta_0\in \sH^2(\M)$ then $\theta\in C(\R_+;\sH^2(\M))\cap C^1(\R_+;\sL^2(\M))$ and
\begin{equation}\label{boundH2-theta}
\forall t\in \R_+,\qquad \|\theta(t)\|_{\sH^2}\leq \|\theta_0\|_{\sH^2}\exp(C(1+\|\theta_0\|_{\sH^1}^4)t).
\end{equation}
\end{proposition}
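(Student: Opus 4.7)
The plan is to work with the conjugated formulation \eqref{limit-equation-conj}, where the stiff linear part $D_{x_1}^2$ has been absorbed into the unitary group $e^{-itD_{x_1}^2}$ (an isometry on each $\sH^k(\M)$, so regularity of $\widetilde\theta = e^{itD_{x_1}^2}\theta$ transfers directly to $\theta$). The Duhamel right-hand side is
$$G(t,\widetilde\theta) := e^{itD_{x_1}^2}\bigl(-\tfrac{\kappa^2}{4}\bigr)e^{-itD_{x_1}^2}\widetilde\theta + \lambda\gamma\, F(t;\widetilde\theta).$$
Under Assumption \ref{assumption1}, multiplication by $\kappa^2/4$ is bounded on both $\sH^1(\M)$ and $\sH^2(\M)$, so the first term is globally Lipschitz in each space; by Lemma \ref{lemF}, $F$ is locally Lipschitz on both spaces. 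A Banach fixed-point argument on a short interval $[0,\tau]$, with $\tau$ depending only on $\|\theta_0\|_{\sH^k}$, yields a unique maximal solution in $C([0,T_{\max});\sH^k(\M))$ for $k=1,2$.

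Next I would establish the conservation laws \eqref{consmass}--\eqref{consenergy}. For $\sH^2$ initial data, pairing \eqref{limit-equation} with $\bar\theta$ (resp. $\partial_t\bar\theta$) in $\sL^2$ and taking the imaginary (resp. real) part yields $\tfrac{d}{dt}\|\theta\|_{\sL^2}^2 = 0$ and $\tfrac{d}{dt}E(\theta)=0$ rigorously. For $\sH^1$ data, the identities extend by a density argument based on continuous dependence of the flow on the initial datum (itself a byproduct of the Picard estimates). Using \eqref{consenergy} together with the 1D Gagliardo-Nirenberg inequality \eqref{sobo1D} and the conserved $\sL^2$ norm,
$$\frac{|\lambda|\gamma}{4}\|\theta\|_{\sL^4}^4 \leq \frac{|\lambda|\gamma}{2}\|\theta_0\|_{\sL^2}^3\|\partial_{x_1}\theta\|_{\sL^2} \leq \frac{1}{4}\|\partial_{x_1}\theta\|_{\sL^2}^2 + C\|\theta_0\|_{\sL^2}^6,$$
which, combined with the trivial bound $\tfrac{1}{8}\|\kappa\|_{\sL^\infty}^2\|\theta_0\|_{\sL^2}^2$ on the potential term, gives after absorption the uniform $\sH^1$ estimate \eqref{boundH1-theta}. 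This precludes blow-up and forces $T_{\max}=+\infty$ in the $\sH^1$ setting.

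For the $\sH^2$ statement, the local $\sH^2$ solution coincides with the $\sH^1$ solution on its lifespan, hence inherits the global $\sH^1$ bound. Then, using \eqref{boundF} and the $\sH^2$-boundedness of multiplication by $\kappa^2$,
$$\|G(t,\widetilde\theta)\|_{\sH^2} \leq C(1+\|\theta\|_{\sH^1}^2)\|\theta\|_{\sH^2}.$$
Since $\partial_t\widetilde\theta = -iG(t,\widetilde\theta)$ and the $\sH^2$ norm of $\widetilde\theta$ equals that of $\theta$, applying Gronwall's lemma together with the uniform $\sH^1$ bound \eqref{boundH1-theta} delivers \eqref{boundH2-theta}, hence global existence in $\sH^2$.

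The main analytical point is the focusing case $\lambda<0$: the energy is then not coercive, and the global $\sH^1$ bound relies on the 1D cubic nonlinearity being energy-subcritical, which is precisely what Gagliardo-Nirenberg \eqref{sobo1D} quantifies through the sublinear power $\|\partial_{x_1}\theta\|_{\sL^2}$ that can be absorbed by the quadratic kinetic term. All other steps are routine applications of the Cauchy-Lipschitz method in Banach spaces and standard Gronwall bootstrap.
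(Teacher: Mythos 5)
Your proposal is correct and follows essentially the same route as the paper: a Banach fixed point on the conjugated Duhamel formulation using Lemma \ref{lemF}, conservation of mass and energy by pairing with $\theta$ and $\partial_t\theta$, the Gagliardo--Nirenberg inequality \eqref{sobo1D} to absorb the quartic term and obtain the global $\sH^1$ bound (the paper treats $\lambda\ge 0$ separately as trivial), and \eqref{boundF} plus Gronwall for the $\sH^2$ bound. Your extra remark on justifying the conservation laws for $\sH^1$ data by density and continuous dependence is a minor refinement of the same argument, not a different approach.
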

\begin{proof}
We introduce
$$\mathcal{F}(\theta)(t)=\theta_{0}-i\int_{0}^t \left\{e^{isD_{x_{1}}^2}\left(-\frac{\kappa^2(x_{1})}{4}\right)e^{-isD_{x_{1}}^2}\theta(s;\cdot)+\lambda\gamma F(s;\theta(s ;\cdot))\right\} \dx s.$$
For $M>0, T>0$, we consider the complete space
$$G_{T, M}=\{C([0,T] ; \sH^1(\M)) :  \forall t\in[0,T],\quad \theta(t)\in \overline{\mathcal{B}}_{\sH^1}(\theta_{0},M)\},$$
where, for all Banach space $X$, $\overline{\mathcal{B}}_{X}(\theta_{0},M)$ denotes the closed ball in $X$, of radius $M$, centered in $\theta_0$. Let us briefly explain why $\mathcal{F}$ is a contraction from $G_{T, M}$ to $G_{T, M}$ as soon as $T$ is small enough. Due to \eqref{lip1} and $F(t,0)=0$, there exists $C>0$ such that for all $M, T>0, t\in[0,T]$ and $\theta\in G_{T,M}$, we have
$$\|\mathcal{F}(\theta)(t)-\theta_{0}\|_{\sH^1}\leq CT+CM^3T$$
which leads to choose $T\leq T_1=M(C+CM^3)^{-1}$. In the same way, there exists $C>0$ such that for all $M, T>0, t\in[0,T]$ and $u_{1}, u_{2}\in G_{T,M}$,
$$\|\mathcal{F}(\theta_{1})(t)-\mathcal{F}(\theta_{2})(t)\|_{\sH^1}\leq (CT+CM^2T) \sup_{t\in[0,T]}\|\theta_{1}(t)-\theta_{2}(t)\|_{\sH^1}$$
so that we choose $T<T_2= (C+CM^2)^{-1}$. It remains to apply the fixed point theorem for any $T\in\left(0,(\min(T_1,T_2)\right)$ and the conclusion is standard. By a continuation argument, it is clear moreover that the solution is global in time if it is bounded in $\sH^1$.

\bigskip
The conservation of the $\sL^2$-norm \eqref{consmass} is obtained by considering the scalar product of \eqref{limit-equation} with $\theta$ and then taking the imaginary part. For the conservation of the energy \eqref{consenergy}, we consider the scalar product of the equation with $\partial_{t}\theta$ and take the real part. Let us now prove \eqref{boundH1-theta}. If $\lambda\geq 0$, it is an immediate consequence of the bounds on the energy and $\sL^2$-norm and the Sobolev embedding $\sH^1(\M)\hookrightarrow \sL^4(\M)$. Let us analyze the case $\lambda<0$. Thanks to \eqref{sobo1D}, we have
$$\frac{|\lambda\gamma|}{4}\int_{\M}|\theta|^4\dx x_{1}\leq \frac{|\lambda\gamma|}{2}\|\theta\|_{\sL^2(\M)}^3\|\partial_{x_{1}}\theta\|_{\sL^2(\M)}=  \frac{|\lambda\gamma|}{2}\|\theta_{0}\|_{\sL^2(\M)}^3\|\partial_{x_{1}}\theta\|_{\sL^2(\M)}$$
so that, for all $\eta\in(0,1)$,
$$\frac{|\lambda\gamma|}{4}\int_{\M}|\theta|^4\dx x_{1}\leq\frac{|\lambda\gamma|}{4}\left(\eta^{-1}\|\theta_{0}\|_{\sL^2(\M)}^{ 6}+\eta\|\partial_{x_{1}}\theta\|_{\sL^2(\M)}^2\right).$$
Choosing $\eta$ such that $\eta\frac{|\lambda\gamma|}{4}<\frac{1}{2}$ and using the bound on the energy, we get the uniform estimate \eqref{boundH1-theta}. In particular, the solution $\theta$ is global in time.

\bigskip
The local well-posedness in $\sH^2(\M)$ can be obtained by a similar procedure. To prove that the $\sH^2$ solution is global in time, we simply use Assumption \ref{assumption1} on $\kappa$ with \eqref{boundF}:
$$\|\theta(t)\|_{\sH^2}\leq \|\theta_0\|_{\sH^2}+\int_0^t C(1+\|\theta(s)\|_{\sH^1}^2)\|\theta(s)\|_{\sH^2}\dx s$$
and conclude by using the $\sH^1$ bound \eqref{boundH1-theta} and the Gronwall lemma.
\end{proof}

\subsection{Cauchy problem in the strip}
\label{cauchystrip}
Let us now analyze the well-posedness of \eqref{CNLS''}, but without any $\eps$-control of the solution.
\begin{proposition}\label{wp}
Let $\phi_0^\eps\in \sH^1_0(\mathcal S)$ and let $\eps\in (0,\eps_0)$. Then, the following properties hold:\\
(i) The problem \eqref{CNLS''} admits a unique maximal solution $\phy^\eps\in C([0,T_{\rm max}^\eps);\sH^1_0(\mathcal S))\cap C^1([0,T_{\rm max}^\eps);\sH^{-1}(\mathcal S))$, with $T_{\rm max}^\eps\in (0,+\infty]$ that satisfies the following conservation laws
\begin{align}
\|\phy^\eps(t;\cdot)\|_{\sL^2}&=\|\phi^\eps_{0}\|_{\sL^2}\quad \mbox{(mass)},\label{consmass2}\\
\mathcal E_\eps(\phy^\eps(t;\cdot))&=\mathcal E_\eps(\phi^\eps_{0})\quad \mbox{(nonlinear energy)},\label{consenergy2}
\end{align}
where $\mathcal E_\eps$ is defined in \eqref{Energy}.\\
(ii) There exists a constant $C_1>0$ such that, if $\eps<\eps_2$ (given in Lemma \ref{lemma-energy}) and if $\eps \|\phi_0^\eps\|_{\sL^2}^2\leq C_1$, then $T_{\rm max}^\eps=+\infty$.\\
(iii) If $\phi_0^\eps$ belongs to $\sH^2\cap \sH^1_0(\mathcal S)$, then $\phy^\eps\in C([0,T_{\rm max}^\eps);\sH^2\cap \sH^1_0(\mathcal S))\cap C^1([0,T_{\rm max}^\eps);\sL^2(\mathcal S))$. 
\end{proposition}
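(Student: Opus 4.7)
The plan is to treat \eqref{CNLS''} as an abstract semilinear Schr\"odinger equation with self-adjoint generator $A_\eps:=\mathcal H_\eps+V_\eps-\mu_1/\eps^2$ (self-adjoint on $\sL^2(\mathcal S)$ with domain $\sH^2\cap\sH^1_0(\mathcal S)$, since $V_\eps$ is bounded), and to run a Duhamel fixed point:
$$\phy^\eps(t)=e^{-itA_\eps}\phi_0^\eps-i\lambda\int_0^t e^{-i(t-s)A_\eps}\bigl(m_\eps^{-1}|\phy^\eps|^2\phy^\eps\bigr)(s)\,\dx s.$$
The group $e^{-itA_\eps}$ is unitary on $\sL^2(\mathcal S)$, on $\sH^1_0(\mathcal S)$ (for the equivalent graph-type norm associated to $A_\eps+c$ with $c$ large enough), and on $\sH^2\cap\sH^1_0(\mathcal S)$ by \eqref{equivnorm}.

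For part (i), I would first solve the equation for an $\sH^2\cap \sH^1_0$ initial datum, where the cubic nonlinearity $u\mapsto m_\eps^{-1}|u|^2u$ is locally Lipschitz (the same argument as in \eqref{lip-Weps}, after removing the outer $e^{it\mathcal H_\eps}$; it relies on the 2D embedding $\sH^2\hookrightarrow \sL^\infty\cap\sW^{1,4}$). A Banach fixed point in $C([0,T];\sH^2\cap\sH^1_0(\mathcal S))$ for small $T$ produces a unique maximal $\sH^2$ solution. Mass and energy conservation are then obtained by the customary $\sL^2$ and energy identities, justified at the $\sH^2$ regularity level. To treat an initial datum $\phi_0^\eps\in \sH^1_0$, I would approximate by $\sH^2$ data $\phi_{0,n}^\eps\to \phi_0^\eps$ in $\sH^1_0$, use mass--energy conservation combined with \eqref{sobo2D} to derive uniform $\sH^1$ bounds on $\phy_n^\eps$ over any compact time interval, and pass to the limit by weak compactness, producing a solution in $C([0,T];\sL^2)\cap \sL^\infty([0,T];\sH^1_0)$; strong continuity in $\sH^1$ and uniqueness follow from an $\sL^2$-energy estimate on the difference of two solutions, the cubic contribution being controlled via the 2D Gagliardo--Nirenberg inequality $\|w\|_{\sL^4}^2\lesssim \|w\|_{\sL^2}\|w\|_{\sH^1}$ and Gronwall. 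The conservation laws then pass to the limit.

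Part (ii) is an immediate consequence of Lemma \ref{lemma-energy}. Applied at any $t\in[0,T^\eps_{\max})$ with $M:=\|\phi_0^\eps\|_{\sL^2}=\|\phy^\eps(t)\|_{\sL^2}$ (mass conservation) and combined with energy conservation, \eqref{minorE} yields
$$\tfrac14\|\partial_{x_1}\phy^\eps(t)\|_{\sL^2}^2+\Bigl(\tfrac{3}{8\eps^2}-C_0 M^4\Bigr)\|\partial_{x_2}(\mathsf{Id}-\Pi_1)\phy^\eps(t)\|_{\sL^2}^2\le \mathcal E_\eps(\phi_0^\eps)+C_0 M^2+C_0 M^6.$$
Choosing $C_1$ so that $\eps M^2\le C_1$ forces $C_0 M^4\le 3/(16\eps^2)$, the parenthesis is nonnegative, and one deduces a uniform-in-$t$ $\sH^1$ bound (using also $\|\partial_{x_2}\Pi_1\phy^\eps\|_{\sL^2}\le (\pi/2)M$). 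The blow-up alternative from the local theory then forces $T^\eps_{\max}=+\infty$.

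For part (iii), the $\sH^2$ local theory constructed above gives a maximal $\sH^2$ existence interval $[0,T^\eps_{\max,2})\subseteq [0,T^\eps_{\max})$, and the task is to show these coincide. For a fixed $T<T^\eps_{\max}$, set $M_T:=\sup_{[0,T]}\|\phy^\eps\|_{\sH^1}<\infty$, which is finite by part (i). Applying $A_\eps$ to the Duhamel formula, exploiting \eqref{equivnorm} to identify $\sH^2$ and graph norms, and invoking the logarithmic bound \eqref{bg}, we obtain
$$\|\phy^\eps(t)\|_{\sH^2}\le C_\eps\|\phi_0^\eps\|_{\sH^2}+C_\eps(M_T)\int_0^t\bigl(1+\log(1+\|\phy^\eps(s)\|_{\sH^2})\bigr)\|\phy^\eps(s)\|_{\sH^2}\,\dx s,$$
and a logarithmic Gronwall argument closes the estimate on $[0,T]$. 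The main technical obstacle is precisely this last step: the cubic nonlinearity is $\sH^1$-critical in dimension two for the $\sH^2$-propagation, a naive Gronwall fails, and only the Br\'ezis--Gallouet refinement encoded in \eqref{bg} allows the a priori bound to close in finite time.
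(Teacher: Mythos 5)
Your parts (ii) and (iii) are sound and essentially follow the paper's argument (the energy lower bound of Lemma \ref{lemma-energy} plus the blow-up alternative for (ii); the Br\'ezis--Gallouet bound \eqref{bg} with a logarithmic Gronwall argument for (iii)). The genuine gap is in part (i), precisely at the step you dispatch in one line: uniqueness of the $\sH^1$ solution. The difference $w=\phy_1^\eps-\phy_2^\eps$ of two $\sL^\infty([0,T];\sH^1_0)$ solutions satisfies $\frac{\dx}{\dx t}\|w\|_{\sL^2}^2\lesssim\int_{\mathcal S}(|\phy_1^\eps|^2+|\phy_2^\eps|^2)|w|^2$, and no H\"older/Gagliardo--Nirenberg splitting closes this in 2D: using $\|w\|_{\sL^4}^2\lesssim\|w\|_{\sL^2}\|w\|_{\sH^1}$ as you propose gives only $\frac{\dx}{\dx t}\|w\|_{\sL^2}^2\lesssim\|w\|_{\sL^2}\|w\|_{\sH^1}$, where $\|w\|_{\sH^1}$ is merely bounded, so Gronwall yields $\|w(t)\|_{\sL^2}\lesssim t$ rather than $w\equiv0$; any other Lebesgue split leaves a power of $\|w\|_{\sL^2}$ strictly below $2$ and the resulting ODE inequality is not of Osgood type. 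The cubic nonlinearity is exactly borderline for $\sL^2$-uniqueness in the energy class in dimension two, $\sH^1(\mathcal S)\not\hookrightarrow\sL^\infty$, and Strichartz estimates are not readily available on the Dirichlet strip. The paper closes this step by Ogawa's method \cite{Ogawa} (see also Theorem 3.6.1 in \cite{Cazenave}), which combines the $\sL^2$ estimate with Trudinger's exponential-integrability inequality; alternatively a Br\'ezis--Gallouet-type logarithmic estimate \cite{BG80} with Osgood's lemma would do, but some such device is indispensable and must be supplied.

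Two secondary points. First, your claim that mass--energy conservation together with \eqref{sobo2D} gives uniform $\sH^1$ bounds on the approximating solutions \emph{over any compact time interval} is too strong: for fixed $\eps$, focusing $\lambda<0$ and large mass, the quartic term cannot be absorbed into the kinetic part (that absorption is exactly the smallness condition of part (ii)), and indeed the proposition only asserts a maximal, possibly finite, existence time. What the construction needs, and what the paper proves, is a uniform-in-$n$ bound on a short time interval, obtained by a bootstrap on $[0,T_n]$ where $\|\phy_n^\eps\|_{\sH^1}\le2\|\phi_0^\eps\|_{\sH^1}$, combined with a Lipschitz-in-time $\sL^2$ estimate. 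Second, "the conservation laws then pass to the limit" hides a real step: the weak limit only yields $\mathcal E_\eps(\phy^\eps(t))\le\mathcal E_\eps(\phi_0^\eps)$, and equality (hence also strong $\sH^1$-continuity in time, via Theorem 3.3.9 of \cite{Cazenave}) is recovered only \emph{after} uniqueness, through time-reversibility of the flow; so the logical order is construction, then uniqueness, then conservation and continuity, not the other way around.
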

\begin{proof}
{\em Step 1: local well-posedness in $\sH^2$.} Let us fix $\eps\in(0,\eps_{0})$ and analyze in a first step the well-posedness in $\sH^2\cap\sH^1_0(\mathcal S)$. For $\phi^\eps_0\in  \sH^2\cap \sH^1_0(\mathcal S)$, we consider the conjugate problem \eqref{CNLS''-conj} in its Duhamel form
$$\widetilde \phy^\eps(t)=\phi^\eps_{0}-i\int_{0}^t \left(e^{is\mathcal{H}_{\eps}}(V_{\eps}-\eps^{-2}\mu_{1})e^{-is\mathcal{H}_{\eps}}\widetilde \phy^\eps(s)+\lambda W_{\eps}(s ; \widetilde \phy^\eps(s))\right)\dx s=\mathcal{W}_{\eps}(\widetilde \phy^\eps)(t).$$
For $M, T>0$, we consider the complete space
$$\widetilde G_{T, M}=\{C([0,T] ; \sH^2\cap \sH^1_0(\mathcal S)) :  \forall t\in[0,T],\quad \theta(t)\in \overline{\mathcal{B}}_{\sH^2}(\theta_{0},M)\}.$$ 
The application $\mathcal{W}_{\eps}$ is a contraction from $\widetilde G_{T,M}$ to $\widetilde G_{T,M}$ for $T$ small enough. Indeed, as in Lemma \ref{wp-limit} and thanks to \eqref{lip-Weps}, there exists $C_\eps>0$ such that for all $T>0$, $M>0$, $t\in[0,T]$ and $\phy_{1}, \phy_{2}\in \widetilde G_{T,M}$,
$$\|\mathcal{W}_{\eps}(\phy_{1})(t)-\phy_{0}\|_{\sH^2}\leq C_\eps T+C_\eps TM^3,$$
$$\|\mathcal{W}_{\eps}(\phy_{1})(t)-\mathcal{W}_{\eps}(\phy_{2})(t)\|_{\sH^2}\leq (C_\eps T+C_\eps TM^2)\sup_{t\in[0,T]}\|\phy_{1}(t)-\phy_{2}(t)\|_{\sH^2},$$
where we have again used the unitarity of $e^{it\mathcal{H}_{\eps}}$ with respect to the graph norm of $\mathcal{H}_{\eps}$ and the equivalence between the graph norm of $\mathcal{H}_{\eps}$ and the $\sH^2$-norm, for each fixed $\eps$. Therefore the Banach fixed point theorem insures the existence and uniqueness of a local in time solution of \eqref{CNLS''-conj} and thus of \eqref{CNLS''} for each given $\eps\in(0,\eps_{0})$. The conservation laws \eqref{consmass2} and \eqref{consenergy2} are obtained similarly as \eqref{consmass} and \eqref{consenergy}. In fact, it is not difficult to deduce the existence of a maximal existence time $T_{\rm max, \sH^2}^\eps\in (0,+\infty]$ such that $\phy^\eps\in C([0,T_{\rm max, \sH^2}^\eps);\sH^2\cap \sH^1_0(\mathcal S))\cap C^1([0, T_{\rm max, \sH^2}^\eps);\sL^2(\mathcal S))$ and such that we have the alternative
\begin{equation}
\label{alternative2}
T_{\rm max, \sH^2}^\eps=+\infty \quad \mbox{or}\quad \lim_{t\to T_{\rm max, \sH^2}^\eps}\|\phy^\eps(t)\|_{\sH^2}=+\infty.
\end{equation}

\bigskip
\noindent
{\em Step 2: local well-posedness in $\sH^1$.} Consider now a Cauchy data $\phi_0^\eps\in \sH^1_0(\mathcal S)$. To prove the local well-posedness in $\sH^1$, we can proceed with the usual argument based on Trudinger's inequality, explained in  Section 3.6 of \cite{Cazenave} and that we sketch here. 

We first recall the construction of  a local weak solution $\phy^\eps\in L^\infty([0,T);\sH^1_0(\mathcal S))\cap W^{1,\infty}((0,T);\sH^{-1}(\mathcal S))$ of  \eqref{CNLS''} by a standard regularization method. We approximate the Cauchy data $\phi_0^\eps$ by a sequence $\phi_{0,n}^{\eps}\in\sH^2\cap \sH^1_0(\mathcal S)$ converging to $\phi_0^\eps$ in $\sH^1_0(\mathcal S)$. Then we apply the well-posedness result in $\sH^2$ proved in Step 1 to obtain a sequence of solutions $\phy^\eps_n \in C([0,T_n];\sH^2\cap \sH^1_0(\mathcal S))\cap C^1([0,T_n];\sL^2(\mathcal S))$ of \eqref{CNLS''} with $\phy^\eps_n(0;\cdot)=\phi^\eps_{0,n}$, satisfying the conservation of mass and energy and where $T_n$ is chosen such that
$$\forall t\in [0,T_n], \qquad \|\phy^\eps_n(t)\|_{\sH^1}\leq 2\|\phi_0^\eps\|_{\sH^1}.$$
From \eqref{CNLS''-conj} and the embedding $\sH^1(\mathcal S)\hookrightarrow \sL^6(\mathcal S)$, we deduce that for $s,t\leq T_n$
$$\|\phy^\eps_n(t)-\phy^\eps_n(s)\|_{\sL^2}=\|\widetilde\phy^\eps_n(t)-\widetilde \phy^\eps_n(s)\|_{\sL^2}\leq C_\eps |t-s|\left(\|\phi_0^\eps\|_{\sH^1}+\|\phi_0^\eps\|_{\sH^1}^3\right)$$
and then, from the conservation of mass and energy, we get
\begin{align*}
\|\phy^\eps_n(t)\|_{\sH^1}^2&\leq \|\phi_{0,n}^{\eps}\|_{\sL^2}^2+\|\nabla \phi_{0,n}^{\eps}\|_{\sL^2}^2+\left\|V_\eps-\frac{\mu_1}{\eps^2}\right\|_{\sL^\infty}\left|\|\phy^\eps_n(t)\|_{\sL^2}^2-\|\phy^\eps_n(0)\|_{\sL^2}^2\right|\\
&\quad +\frac{|\lambda|}{2}\|m^\eps\|_{\sL^\infty}\left|\|\phy^\eps_n(t)\|_{\sL^4}^4-\|\phy^\eps_n(0)\|_{\sL^4}^4\right|\\
&\leq \|\phi^\eps_{0,n}(t)\|_{\sH^1}^2+C_\eps t\left(\|\phi_0^\eps\|_{\sH^1}+\|\phi_0^\eps\|_{\sH^1}^3\right)^2.
\end{align*}
From this estimate, we deduce that there exists $T>0$, independent of $n$ (but of course depending on $\eps$), such $T_n\geq T$. The sequence $(\phy^\eps_n)_{n\in \mathbb N}$ being bounded in $L^\infty([0,T);\sH^1_0(\mathcal S))\cap W^{1,\infty}((0,T);\sH^{-1}(\mathcal S))$, we can use the local compactness of $\sH^1$ into $\sL^6$ to extract a subsequence that converges to a weak solution of \eqref{CNLS'}. This weak solution satisfies in fact $\|\phy^\eps(t)\|_{\sL^2}=\|\phi^\eps_0\|_{\sL^2}$ and the inequality $\mathcal E_\eps(\phy^\eps(t))\leq \mathcal E_\eps(\phi^\eps_0)$.

Next, by using Ogawa's method \cite{Ogawa} (see Theorem 3.6.1 in \cite{Cazenave}), we prove the uniqueness of the weak solution $\phy^\eps\in L^\infty([0,T);\sH^1_0(\mathcal S))\cap W^{1,\infty}([0,T);\sH^{-1}(\mathcal S))$. This crucial property relies on an $\sL^2$ estimate and Trudinger's inequality.

A consequence of the uniqueness property is that the NLS equation \eqref{CNLS'} is time-reversible, so one has $\mathcal E_\eps(\phi^\eps_0)\leq \mathcal E_\eps(\phy^\eps(t))$ and then the energy is exactly conserved: the weak solution $\phy^\eps$ satisfies \eqref{consenergy2}. From this, one deduces (see Theorem 3.3.9 of \cite{Cazenave}) that $\phy^\eps\in C([0,T];\sH^1_0(\mathcal S))\cap C^1([0,T);\sH^{-1}(\mathcal S))$, that the solution depends continuously from the initial data, and that the exists a maximal existence time $T_{\rm \max, \sH^1}^\eps\in (0,+\infty]$ with the alternative
\begin{equation}
\label{alternative}
T_{\rm max, \sH^1}^\eps=+\infty \quad \mbox{or}\quad \lim_{t\to T_{\rm max, \sH^1}^\eps}\|\phy^\eps(t)\|_{\sH^1}=+\infty.
\end{equation}

\bigskip
\noindent
{\em Step 3: equality of the maximal existence times}. Let $\phi_0^\eps\in \sH^2\cap\sH^1_0(\mathcal S)$. From the previous two steps, there exists a maximal existence time $T_{\rm max, \sH^2}^\eps$ (resp. $T_{\rm max, \sH^1}^\eps$) of the $\sH^2$ (resp. $\sH^1$) solution of \eqref{CNLS'}. Moreover, by \eqref{alternative2} and \eqref{alternative}, it is already obvious that $T_{\rm max, \sH^2}^\eps\leq T_{\rm max, \sH^1}^\eps$. Let us prove by a contradiction argument that we have in fact the equality of these two maximal existence times:
\begin{equation}
\label{equa}
T_{\rm max, \sH^2}^\eps=T_{\rm max, \sH^1}^\eps=T^\eps_{\rm max}.
\end{equation}
Assume that $T_{\rm max, \sH^2}^\eps< T_{\rm max, \sH^1}^\eps$. Then $\phy^\eps$ is bounded by a constant $M^\eps$ in $\sH^1$ norm on $[0,T_{\max,\sH^2}^\eps]$ and one has
\begin{equation}
\label{liminfini}
 \lim_{t\to T_{\rm max, \sH^2}^\eps}\|\phy^\eps(t)\|_{\sH^2}=+\infty.
\end{equation}
From \eqref{CNLS''} and \eqref{bg} we get
$$\|\dr_{t}\phy^\eps\|_{\sH^2}\leq C_\eps\big(1+\log\left(1+\|\phy^\eps(t;\cdot)\|_{\sH^2}\right)\big)\|\phy^\eps(t;\cdot)\|_{\sH^2}.$$
It remains to use an argument \textit{\`a la} Gronwall from \cite{BG80}. Given a Banach space $G$, let us consider a function $\phy\in\mathcal{C}^1([0,T^*),G)$ such that for, $t\in[0,T^*)$,
$$\|\phy'(t)\|\leq C(1+\log(1+\|\phy(t)\|))\|\phy(t)\|.$$
We easily get
$$\|\phy(t)\|\leq F(t),\qquad \mbox{ with }\quad F(t)=\|\phy_{0}\|+C\int_{0}^t (1+\log(1+\|\phy(\tau)\|))\|\phy(\tau)\|\dx\tau$$
and
$$\frac{\dx}{\dx t}F(t)=C(1+\log(1+\|\phy(t)\|))\|\phy(t)\|\leq C(1+\log(1+F(t)))F(t),$$
so that
$$\frac{\dx}{\dx t}\log\left(1+\log(1+F(t))\right)\leq C.$$
Consequently, we find an estimate of the form
$$\|\phy(t)\|\leq F(t)\leq e^{ae^{b t}}.$$
Applying this inequality to $\phy^\eps$ with $G=\sH^2(\mathcal S)$, one gets a bound for the $\sH^2$ norm of $\phy^\eps$ on the interval $[0,T_{\max,\sH^2}^\eps)$, which is a contradiction with \eqref{liminfini}. The proof of \eqref{equa} is complete.

\bigskip
\noindent
{\em Step 4: global existence for $\eps$ small enough.}
Let $M_\eps=\|\phy^\eps\|_{\sL^2}=\|\phi_0^\eps\|_{\sL^2}$. By Lemma \ref{lemma-energy}, for $\eps\in(0,\eps_2)$, one has
\begin{align*}
\frac{1}{4}\|\partial_{x_1}\phy^\eps\|_{\sL^2}^2+\left(\frac{3}{8\eps^2}-C_0M_\eps^4\right)\|\partial_{x_2}(\mathsf{Id}-\Pi)\phy^\eps\|^2_{\sL^2}
&\leq \mathcal{E}_{\eps}(\phy^\eps)+C_0 M_\eps^2+C_0M_\eps^6\\
&\quad = \mathcal{E}_{\eps}(\phi_0^\eps)+C_0 M_\eps^2+C_0M_\eps^6.\\
\end{align*}
Hence, if $\eps M_\eps^2\leq (\frac{3}{16C_0})^{1/2}$, this inequality provides an $\sH^1$ bound for $\phy^\eps$ and, by \eqref{alternative}, we have $T_{\rm max}=+\infty$.
\end{proof}
\section{Reduction to the limit equation}\label{S:red}

This section is devoted to the proof of our two main theorems. As for the study of the Cauchy problem in Subsection \ref{cauchystrip}, we shall start with the case of $\sH^2$ initial data, which is simpler than the case of data in the energy space $\sH^1$ requiring an additional regularization argument.

\subsection{Proof of Theorem \ref{mainthmH2}}
\label{sectionmainthmH2}

Consider a sequence of Cauchy data $\phi_0^\eps\in \sH^2\cap \sH^1_0(\mathcal S)$ satisfying \eqref{ass3} and let $\theta^\eps(t)$ and $\phy^\eps(t)$ be respectively the solutions of \eqref{limit-equation} and \eqref{CNLS''}. Items {\em (i)} and  {\em (ii)} of Theorem \ref{mainthmH2} are direct consequences of Propositions \ref{wp-limit} and \ref{wp}. Notice that $\eps_1(M_0)$ is defined once for all by \eqref{eps1}.

Let us prove Item {\em (iii)}. To this aim, we first prove that Assumption \ref{assumption2} is satisfied, i.e. that the energy of $\phi_0^\eps$ is bounded from above. From \eqref{ass3} and \eqref{equivnorm}, one gets
\begin{equation}
\label{e1}
\left\|D_{x_1}^2\phi_0^\eps\right\|_{\sL^2}+\frac{1}{\eps^2}\left\|\left(D_{x_2}^2-\mu_1\right)\phi_0^\eps\right\|_{\sL^2}+\|\phi_0^\eps\|_{\sL^2}\leq C(1+M_2).
\end{equation}
Hence, by using \eqref{equivnorm2}, we have
\begin{equation}
\label{e2}
\|\mathcal P_{\eps,1}\phi_0^\eps\|_{\sL^2}\leq C,\qquad \|D_{x_{2}}\phi_{0}^\eps\|^2_{\sL^2}-\mu_{1}\|\phi_{0}^\eps\|^2_{\sL^2}\leq C\eps^2.
\end{equation}
Moreover, from the proof of Lemma \ref{lemma-energy}, we write
\begin{align}
\|\phi_0^\eps\|_{\sL^4}^4
&\leq C\|\Pi_1 \phi_0^\eps\|^4_{\sL^4}+C\|(\mathsf{Id}-\Pi_1) \phi_0^\eps\|^4_{\sL^4}\nonumber\\
&\leq C\|\phi_0^\eps\|_{\sL^2}^3\|\Pi_1\partial_{x_{1}}\phi_0^\eps\|_{\sL^2}+C\|\phi_0^\eps\|^2_{\sL^2} \|(\mathsf{Id}-\Pi_1)\partial_{x_{1}}\phi_0^\eps\|_{\sL^2} \|\partial_{x_{2}}(\mathsf{Id}-\Pi_1)\phi_0^\eps\|_{\sL^2}\nonumber\\
&\leq  C\|\phi_0^\eps\|_{\sL^2}^3\|\partial_{x_{1}}\phi_0^\eps\|_{\sL^2}+C\eps\|\phi_0^\eps\|^2_{\sL^2} \|\partial_{x_{1}}\phi_0^\eps\|_{\sL^2}\leq C\label{e3}
\end{align}
where we used \eqref{e2} and \eqref{spectral-gap}. Hence, \eqref{e1}, \eqref{e2} and \eqref{e3} yield $\mathcal E_\eps(\phi_0^\eps)\leq M_1$, for some $M_1>0$ independent of $\eps$: the sequence of Cauchy data satisfies Assumption \ref{assumption2}. Therefore, by conservation of mass and energy, for all $t\geq 0$, the sequence $\phy^\eps(t)$ also satisfies Assumption \ref{assumption2}. We can then apply Lemma \ref{cauchy-tensorial} to $\phy^\eps(t)$: for all $t\geq 0$ and for all $\eps\leq \eps_1(M_0)$, we have
\begin{equation}
\label{conft}
\|\phy^\eps(t)\|_{\sH^1(\mathcal S)}\leq C\quad\mbox{and}\quad \|(\mathsf{Id}-\Pi_1)\phy^\eps(t)\|_{\sL^2(\M,\sH^1(-1,1))}\leq C\,\eps.
\end{equation}

\bigskip
Let us now deal with the NLS equation \eqref{CNLS''} projected on $e_1(x_2)$: setting $u^\eps=\langle\phy^\eps(t),e_1\rangle_{\sL^2((-1,1))}$, we get
\begin{equation}
\label{projCNLS}
i\partial_{t}u^\eps=D_{x_{1}}^2u^\eps-\frac{\kappa^2(x_{1})}{4}u^\eps+\lambda\gamma |u^\eps|^2u^\eps+\langle R_{\eps}(\phy^\eps),e_1\rangle_{\sL^2((-1,1))}+\langle S_{\eps}(\phy^\eps),e_1\rangle_{\sL^2((-1,1))},
\end{equation}
with $u^\eps(0;\cdot)=\theta^\eps_0$ and where, for all $\phy\in \sH^2(\mathcal S)$, we have denoted
\begin{align}
R_\eps(\phy)&=m_\eps^{-1/2}D_{x_1}\left(m_\eps^{-1}D_{x_1}(m_\eps^{-1/2}\phy)\right)-D_{x_1}^2\phy-\frac{\kappa^2}{4}\left(m_\eps^{-2}-1\right)\phy\label{Reps}\\
S_\eps(\phy)&= \lambda m_\eps^{-1} |\phy|^2\phy-\lambda |\Pi_1\phy|^2\Pi_1\phy.\label{Seps}
\end{align}
Since $\theta^\eps=\theta^\eps(t,x_1)$ and $e_1=\Pi_1e_1$, we have
\begin{align*}
\|\phy^\eps(t)-\theta^\eps(t)e_1\|_{\sL^2(\mathcal S)}^2
&=\|\Pi_1(\phy^\eps(t)-\theta^\eps(t)e_1)\|_{\sL^2(\mathcal S)}^2+\|(\mathsf{Id}-\Pi_1)\phy^\eps(t)\|_{\sL^2(\mathcal S)}^2\\
&\leq \|u^\eps(t)-\theta^\eps(t)\|_{\sL^2(\M)}^2+C\eps^2
\end{align*}
by \eqref{conft}. Thus, to deduce \eqref{esti-erreur2}, it is enough to prove the following property: for all $T>0$, there exist $C_T>0$  and $\eps_T\in (0,\eps_1(M_0))$ such that, for all $\eps<\eps_T$, we have
\begin{equation}
\label{erru}
\forall t\in[0,T],\qquad \|u^\eps(t)-\theta^\eps(t)\|_{\sL^2(\M)}\leq C_T\,\eps.
\end{equation}
This fact will be a consequence of the following lemmas, that we prove further.
\begin{lemma}
\label{lemma2}
For all $\phy\in\sH^2(\mathcal S)$, we have the interpolation estimate
\begin{equation}
\label{interpolation}
\|\phy\|_{\sL^\infty}\lesssim \|\phy\|_{\sL^2}^{1/2}\|\phy\|_{\sH^2}^{1/2}.
\end{equation}
\end{lemma}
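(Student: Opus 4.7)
The estimate is a standard two-dimensional Gagliardo--Nirenberg interpolation on the strip $\mathcal S = \M\times(-1,1)$, and the plan is to reduce it to the analogous inequality on $\R^2$ and then to prove that one by a direct Fourier argument. First I would invoke a continuous linear extension operator $E\colon \sH^2(\mathcal S)\to \sH^2(\R^2)$ that is simultaneously bounded as a map $\sL^2(\mathcal S)\to\sL^2(\R^2)$, with
$$\|E\phy\|_{\sL^2(\R^2)}\lesssim \|\phy\|_{\sL^2(\mathcal S)},\qquad \|E\phy\|_{\sH^2(\R^2)}\lesssim \|\phy\|_{\sH^2(\mathcal S)},$$
and $E\phy=\phy$ on $\mathcal S$. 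When $\M=\R$ this is a standard Stein/reflection extension across the two boundary lines $x_2=\pm 1$; when $\M=\T$ the same transverse reflection (followed by a cut-off in $x_2$) suffices since no extension in $x_1$ is required, $\T$ being already boundary-free.

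Second, I would establish the Euclidean interpolation
$$\|u\|_{\sL^\infty(\R^2)}\lesssim \|u\|_{\sL^2(\R^2)}^{1/2}\|u\|_{\sH^2(\R^2)}^{1/2}\qquad \forall u\in \sH^2(\R^2)$$
by the following short Fourier argument. By Fourier inversion $|u(x)|\leq (2\pi)^{-2}\int_{\R^2}|\hat u(\xi)|\dx\xi$; splitting the integral at a radius $R>0$ and applying Cauchy--Schwarz to each piece gives
$$\int_{|\xi|\leq R}|\hat u(\xi)|\dx\xi\leq |\{|\xi|\leq R\}|^{1/2}\|\hat u\|_{\sL^2}\lesssim R\,\|u\|_{\sL^2},$$
and, using the weight $(1+|\xi|^2)$,
$$\int_{|\xi|>R}|\hat u(\xi)|\dx\xi\leq \Big(\int_{|\xi|>R}(1+|\xi|^2)^{-2}\dx\xi\Big)^{1/2}\|u\|_{\sH^2}\lesssim R^{-1}\|u\|_{\sH^2}.$$
Hence $\|u\|_{\sL^\infty}\lesssim R\|u\|_{\sL^2}+R^{-1}\|u\|_{\sH^2}$, and optimizing at $R=(\|u\|_{\sH^2}/\|u\|_{\sL^2})^{1/2}$ yields the claim.

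Combining these two steps with $u=E\phy$ proves the lemma. The argument is entirely classical; the only minor point to watch is that the constants in the extension can be chosen independently of any parameter, which is automatic here because the strip $\mathcal S$ is fixed (it does not depend on $\eps$). No real obstacle arises.
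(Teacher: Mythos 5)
Your proof is correct, but it takes a somewhat different route from the paper's. The paper simply invokes the Sobolev embedding $\|u\|_{\sL^\infty(\R^2)}\lesssim \|u\|_{\sL^2}+\|D_{x_1}^2u\|_{\sL^2}+\|D_{x_2}^2u\|_{\sL^2}$ and obtains the multiplicative form \eqref{interpolation} by a homogeneity argument, inserting the rescaled function $u_\lambda(x)=\phy(\lambda x)$ with $\lambda=\|\phy\|_{\sL^2}^{1/2}\|\phy\|_{\sH^2}^{-1/2}$ (the extension from $\mathcal S$ to $\R^2$ being left implicit, as it already was for the Br\'ezis--Gallouet inequality). You instead prove the $\R^2$ inequality from scratch by Fourier inversion, splitting the frequency integral at radius $R$, applying Cauchy--Schwarz on each piece and optimizing in $R$; this is a self-contained derivation of the same endpoint estimate, and the two mechanisms are of course two faces of the same interpolation (optimizing in $R$ plays the role of the scaling parameter $\lambda$). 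Your treatment of the extension operator is actually more explicit than the paper's, and the constants are indeed uniform since $\mathcal S$ is fixed. The only point I would tighten is the case $\M=\T$: after reflecting and cutting off in $x_2$ you land on $\T\times\R$, not $\R^2$, so to use the Euclidean inequality you should either redo the Fourier argument with series in $x_1$ and transform in $x_2$, or view the function as $2\pi$-periodic in $x_1$ and multiply by a cut-off in $x_1$ equal to $1$ on a period before applying the $\R^2$ estimate; both fixes are routine and do not affect the result.
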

\begin{lemma}
\label{lemma1}
Let $\phy\in\sH^2(\mathcal S)$, then, for all $\eps\in (0,\eps_0)$,
$$\|R_\eps(\phy)\|_{\sL^2}\lesssim \eps\|\phy\|_{\sH^2}$$
and
$$\|S_\eps(\phy)\|_{\sL^2}\lesssim \|\phy\|_{\sL^2}\|\phy\|_{\sH^2}\|(\mathsf{Id}-\Pi_1)\phy\|_{\sL^2}+\eps \|\phy\|_{\sL^2}^2\|\phy\|_{\sH^2},$$
where $R_\eps$ and $S_\eps$ are defined by \eqref{Reps} and \eqref{Seps}.
\end{lemma}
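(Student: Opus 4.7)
The two bounds are independent, so I would treat them in turn. The first is purely linear and amounts to counting powers of $\eps$ in an explicit expansion of $\mathcal P_{\eps,1}^2$; the second is a purely nonlinear Sobolev estimate that combines the orthogonal decomposition $\phy=\Pi_1\phy+(\mathsf{Id}-\Pi_1)\phy$ with the interpolation inequality of Lemma \ref{lemma2}.

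For the bound on $R_\eps$, I would first rewrite $\mathcal P_{\eps,1}$ as a first-order differential operator in $x_1$ with smooth coefficients by expanding $m_\eps^{-1/2}D_{x_1}(m_\eps^{-1/2}\phy)$: since $\partial_{x_1}m_\eps=-\eps x_2\kappa'$, one obtains
$$\mathcal P_{\eps,1}\phy=m_\eps^{-1}D_{x_1}\phy-i\,\frac{\eps x_2\kappa'}{2m_\eps^{2}}\phy.$$
Iterating, one gets an expansion
$$\mathcal P_{\eps,1}^2\phy=m_\eps^{-2}D_{x_1}^2\phy+a_\eps(x_1,x_2)\,D_{x_1}\phy+b_\eps(x_1,x_2)\,\phy,$$
where every term in the coefficients $a_\eps,b_\eps$ contains at least one $x_1$-derivative of $m_\eps^{\pm k}$, so by Assumption \ref{assumption1} (bounds on $\kappa,\kappa',\kappa''$) we have $\|a_\eps\|_{\sL^\infty}+\|b_\eps\|_{\sL^\infty}\lesssim \eps$. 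Subtracting $D_{x_1}^2\phy+\frac{\kappa^2}{4}(m_\eps^{-2}-1)\phy$ produces the clean identity
$$R_\eps(\phy)=(m_\eps^{-2}-1)\Big(D_{x_1}^2\phy-\tfrac{\kappa^2}{4}\phy\Big)+a_\eps D_{x_1}\phy+b_\eps\phy,$$
and since $m_\eps^{-2}-1=O(\eps)$ uniformly, applying the triangle inequality and using $\|D_{x_1}\phy\|_{\sL^2}\leq \|\phy\|_{\sH^2}$ yields $\|R_\eps(\phy)\|_{\sL^2}\lesssim \eps\,\|\phy\|_{\sH^2}$.

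For the bound on $S_\eps$, I would split
$$S_\eps(\phy)=\lambda(m_\eps^{-1}-1)|\phy|^2\phy+\lambda\big(|\phy|^2\phy-|\Pi_1\phy|^2\Pi_1\phy\big).$$
The first summand is handled using $\|m_\eps^{-1}-1\|_{\sL^\infty}\lesssim \eps$ together with $\||\phy|^2\phy\|_{\sL^2}\leq \|\phy\|_{\sL^\infty}^2\|\phy\|_{\sL^2}$ and Lemma \ref{lemma2}, which gives the $\eps\|\phy\|_{\sL^2}^2\|\phy\|_{\sH^2}$ contribution. For the second summand, setting $\phy_1=\Pi_1\phy$ and $\phy_\perp=(\mathsf{Id}-\Pi_1)\phy$, a direct algebraic expansion
$$|\phy|^2\phy-|\phy_1|^2\phy_1=|\phy|^2\phy_\perp+\big(\phy_1\overline{\phy_\perp}+\phy_\perp\overline{\phy_1}+|\phy_\perp|^2\big)\phy_1$$
exhibits it as a sum of triple products, each containing at least one factor $\phy_\perp$; placing that factor in $\sL^2$ and the remaining two in $\sL^\infty$ (controlled by $\|\phy\|_{\sL^\infty}\lesssim \|\phy\|_{\sL^2}^{1/2}\|\phy\|_{\sH^2}^{1/2}$ via Lemma \ref{lemma2}) produces the $\|\phy\|_{\sL^2}\|\phy\|_{\sH^2}\|(\mathsf{Id}-\Pi_1)\phy\|_{\sL^2}$ contribution.

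The only point requiring care is the bookkeeping in the $R_\eps$ expansion: I need to verify that after collecting terms every coefficient except the one multiplying $D_{x_1}^2\phy$ carries at least one factor of $\eps$, and that the residual $(m_\eps^{-2}-1)D_{x_1}^2\phy$ combines exactly with $-\frac{\kappa^2}{4}(m_\eps^{-2}-1)\phy$ to factor out $m_\eps^{-2}-1=O(\eps)$. Nothing else beyond the triangle inequality and Lemma \ref{lemma2} is needed, so both estimates are essentially routine once the decompositions above are set up.
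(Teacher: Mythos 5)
Your proof is correct and follows essentially the same route as the paper: the paper's (very terse) argument likewise reduces the $R_\eps$ bound to $m_\eps\geq 1-\eps_0\|\kappa\|_{\sL^\infty}>0$ together with $\|m_\eps-1\|_{\sW^{2,\infty}}\lesssim\eps$, and handles $S_\eps$ by an algebraic splitting of $|\phy|^2\phy-|\Pi_1\phy|^2\Pi_1\phy$ into triple products each carrying a factor $(\mathsf{Id}-\Pi_1)\phy$, combined with the interpolation estimate \eqref{interpolation}. Your cubic decomposition is only a harmless rearrangement of the paper's identity \eqref{calcSeps}, and your explicit expansion of $\mathcal P_{\eps,1}^2$ just spells out what the paper calls ``immediate''.
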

\begin{lemma}
\label{lemma3}
Let $T>0$, let $\eps \in (0,\eps_0)$ and let $\phy^\eps\in C([0,T];\sH^2\cap \sH^1_0(\mathcal S))\cap C^1([0,T];\sL^2(\mathcal S))$ be solution of \eqref{CNLS''}. Assume moreover that we have an $\sL^\infty$ bound $\|\phy^\eps\|_{\sL^\infty([0,T]\times \mathcal S)}\leq M$, with $M$ independent of $\eps$. Then there exists $C_{M,T}>0$ such that, for all $t\in [0,T]$, we have
$$\left\|\left(\mathcal H_\eps-\frac{\mu_1}{\eps^2}\right)\phy^\eps(t)\right\|_{\sL^2}+\|\phy^\eps(t)\|_{\sL^2}\leq C_{M,T}\left(\left\|\left(\mathcal H_\eps-\frac{\mu_1}{\eps^2}\right)\phy^\eps(0)\right\|_{\sL^2}+\|\phy^\eps(0)\|_{\sL^2}\right).$$
\end{lemma}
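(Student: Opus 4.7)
The plan is to perform an $\sL^2$ energy estimate on $v^\eps := \partial_t \phy^\eps$, which belongs to $C([0,T];\sL^2(\mathcal S))$ by hypothesis. Differentiating \eqref{CNLS''} in time, $v^\eps$ formally satisfies the linear equation
\begin{equation*}
i\partial_t v^\eps = \mathcal H_\eps v^\eps + \left(V_\eps - \frac{\mu_1}{\eps^2}\right) v^\eps + \lambda m_\eps^{-1}\bigl(2|\phy^\eps|^2 v^\eps + (\phy^\eps)^2 \overline{v^\eps}\bigr).
\end{equation*}
Pairing with $v^\eps$ in $\sL^2$ and taking real parts, the two self-adjoint linear terms contribute zero --- crucially, the singular factor $-\mu_1/\eps^2$ also drops out because it is real-valued --- as does the diagonal cubic term $2m_\eps^{-1}|\phy^\eps|^2|v^\eps|^2$; only the off-diagonal part survives, giving
\begin{equation*}
\frac{d}{dt}\|v^\eps(t)\|_{\sL^2}^2 = 2\lambda\,\Im \int_{\mathcal S} m_\eps^{-1}(\phy^\eps)^2\,\overline{v^\eps}^{\,2}\,\dx x_1\dx x_2.
\end{equation*}
Thanks to Assumption \ref{assumption1} (which yields $\|m_\eps^{-1}\|_{\sL^\infty}\leq C$ and $\|V_\eps\|_{\sL^\infty}\leq C$ uniformly in $\eps\in(0,\eps_0)$) and the $\sL^\infty$ bound $\|\phy^\eps\|_{\sL^\infty}\leq M$, the right-hand side is majorized by $CM^2\|v^\eps\|_{\sL^2}^2$, so Gronwall's lemma yields $\|v^\eps(t)\|_{\sL^2}\leq e^{CM^2 t/2}\|v^\eps(0)\|_{\sL^2}$ on $[0,T]$.

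It remains to translate this back into a bound on $(\mathcal H_\eps - \mu_1/\eps^2)\phy^\eps$. Rewriting \eqref{CNLS''} as
\begin{equation*}
iv^\eps = \left(\mathcal H_\eps - \frac{\mu_1}{\eps^2}\right)\phy^\eps + V_\eps\phy^\eps + \lambda m_\eps^{-1}|\phy^\eps|^2\phy^\eps,
\end{equation*}
the $\eps^{-2}$-singular contributions of $\mathcal H_\eps$ and of $V_\eps - \mu_1/\eps^2$ cancel. Applied at $t=0$ this bounds $\|v^\eps(0)\|_{\sL^2}$ by $\|(\mathcal H_\eps - \mu_1/\eps^2)\phi_0^\eps\|_{\sL^2} + C(1+M^2)\|\phi_0^\eps\|_{\sL^2}$; applied at time $t$ it controls $\|(\mathcal H_\eps - \mu_1/\eps^2)\phy^\eps(t)\|_{\sL^2}$ by $\|v^\eps(t)\|_{\sL^2} + C(1+M^2)\|\phy^\eps(t)\|_{\sL^2}$. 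The mass conservation \eqref{consmass2} gives $\|\phy^\eps(t)\|_{\sL^2}=\|\phi_0^\eps\|_{\sL^2}$, and combining everything yields the statement with $C_{M,T}$ of the form $C(1+M^2)\exp(CM^2T)$, independent of $\eps$.

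The main obstacle is making the energy identity for $v^\eps$ rigorous: one only has $v^\eps\in C([0,T];\sL^2)$, so neither $\langle \mathcal H_\eps v^\eps,v^\eps\rangle$ nor $\langle (V_\eps - \mu_1/\eps^2)v^\eps,v^\eps\rangle$ is a priori defined. I would handle this by a density argument: approximate $\phi_0^\eps$ in the graph norm of $\mathcal H_\eps$ by a sequence $\phi_{0,n}^\eps\in \Dom(\mathcal H_\eps^2)$ for which the corresponding solutions $\phy_n^\eps$ given by Proposition \ref{wp} are regular enough that $v_n^\eps=\partial_t\phy_n^\eps$ lies in $C([0,T];\sH^1_0(\mathcal S))$ (the form domain of $\mathcal H_\eps$); perform the computation rigorously at level $n$ and pass to the limit $n\to\infty$ using the continuous dependence of the solution map on the initial data. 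An equivalent option is to regularize via the Yosida-type operator $J_\delta=(1+\delta\mathcal H_\eps)^{-1}$, which commutes with $\mathcal H_\eps - \mu_1/\eps^2$, test the equation against $J_\delta^2 v^\eps$, and let $\delta\to 0$.
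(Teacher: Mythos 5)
Your argument is exactly the paper's proof: differentiate \eqref{CNLS''} in time, take the $\sL^2$ pairing of the resulting linear equation with $\partial_t\phy^\eps$ so that the self-adjoint (including the $\eps^{-2}$-singular) terms drop and only the off-diagonal cubic term remains, apply Gronwall using the $\sL^\infty$ bound, and then recover the graph-norm bound from the equation at times $0$ and $t$ together with mass conservation. Your closing discussion of how to justify the formal time-differentiation (density in the graph norm or a Yosida-type regularization) goes beyond the paper, which performs this step formally, but it does not change the approach.
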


\bigskip
\noindent
{\bf End of proof of Theorem \ref{mainthmH2}.}$\qquad$ In this proof, $C$ denotes a generic constant that only depends on the two upper bounds $M_0$ and $M_1$ in Assumption \ref{assumption2}. Consider the quantity
\begin{align*}
M=2\sup_{\eps\in (0,\eps_1(M_0))}\sup_{t\geq 0}\|\theta^\eps(t)\|_{\sL^\infty(\M)}&\leq C\sup_{\eps\in (0,\eps_1(M_0))}\sup_{t\geq 0}\|\theta^\eps(t)\|_{\sH^1(\M)}\\
&\leq C+C\sup_{\eps\in (0,\eps_1(M_0))}\|\theta_0^\eps\|_{\sH^1(\M)}^2\\
&\leq C+C\sup_{\eps\in (0,\eps_1(M_0))}\|\phi_0^\eps\|_{\sH^1(\mathcal S)}^2<+\infty,
\end{align*}
where we used the Sobolev embedding $\sH^1(\M)\hookrightarrow \sL^\infty(\M)$, the estimate \eqref{boundH1-theta}, Cauchy-Schwarz and the uniform bound \eqref{conft}. Next, for $\eps\in (0,\eps_1(M_0))$, by \eqref{conft}, \eqref{interpolation} and \eqref{e1} (which yields a uniform $\sH^2$ bound on $\phi_0^\eps$), we have
\begin{align}
\|\phi_0^\eps\|_{\sL^\infty}&\leq \|\phi_0^\eps-\theta_0^\eps \,e_1\|_{\sL^\infty}+\|\theta_0^\eps \,e_1\|_{\sL^\infty}=\|(\mathsf{Id}-\Pi_1)\phi_0^\eps\|_{\sL^\infty}+\|\theta_0^\eps\|_{\sL^\infty}\nonumber\\
&\leq C\|(\mathsf{Id}-\Pi_1)\phi_0^\eps\|_{\sL^2}^{1/2}\|(\mathsf{Id}-\Pi_1)\phi_0^\eps\|_{\sH^2}^{1/2}+\|\theta_0^\eps\|_{\sL^\infty}\nonumber\\
&\leq C\eps^{1/2}(1+M_2)^{1/2}+\frac{M}{2}.\label{t1}
\end{align}
Hence, for $\eps\leq M^2/(16C^2(1+M_2))$, one has $\|\phi_0^\eps\|_{\sL^\infty}\leq 3M/4$ and, by continuity of $\|\phy^\eps(t)\|_{\sL^\infty}$, we know that
\begin{equation}
\label{defTeps}
\end{equation}
belongs to $(0,+\infty]$. By a continuation argument, it is clear moreover that
\begin{equation}
\label{alterTeps}
\mbox{if}\quad T_\eps<+\infty \quad\mbox{then}\quad \|\phy^\eps(T_\eps)\|_{\sL^\infty}=M.
\end{equation}

Let us fix $T>0$. For all $t\leq \min(T,T_\eps)$, one has $\|\phy^\eps(t)\|_{\sL^\infty}\leq M$ so, from Lemma \ref{lemma3}, from \eqref{equivnorm} and from \eqref{ass3}, we deduce that
\begin{align*}
&\left\|D_{x_1}^2\phy^\eps(t)\right\|_{\sL^2}+\frac{1}{\eps^2}\left\|\left(D_{x_2}^2-\mu_1\right)\phy^\eps(t)\right\|_{\sL^2}+\|\phy^\eps(t)\|_{\sL^2}\leq \\
&\hspace*{5cm} \leq  C\left\|\left(\mathcal H_\eps-\frac{\mu_1}{\eps^2}\right)\phy^\eps(t)\right\|_{\sL^2}+C\|\phy^\eps(t)\|_{\sL^2}\\
&\hspace*{5cm}\leq C_{M,T}\left(\left\|\left(\mathcal H_\eps-\frac{\mu_1}{\eps^2}\right)\phi^\eps_0\right\|_{\sL^2}+\|\phi^\eps_0\|_{\sL^2}\right)\\
&\hspace*{5cm}\leq C_{M,T}\,(1+M_2).
\end{align*}
This yields the $\sH^2$ estimate
\begin{equation}
\|\phy^\eps(t)\|_{\sH^2}\leq \left\|D_{x_1}^2\phy^\eps(t)\right\|_{\sL^2}+\left\|\left(D_{x_2}^2-\mu_1\right)\phy^\eps(t)\right\|_{\sL^2}+(1+\mu_1)\|\phy^\eps(t)\|_{\sL^2}\leq CC_{M,T}(1+M_2).\label{e4}
\end{equation}
We can now apply Lemma \ref{lemma1}, together with \eqref{conft} and \eqref{e4} and, for all $t\leq \min(T,T_\eps)$, obtain
\begin{equation}
\label{e5}
\|R_\eps(\phy^\eps(t))\|_{\sL^2}+\|S_\eps(\phy^\eps(t))\|_{\sL^2}\leq \eps\,CC_{M,T}(1+M_2).
\end{equation}
Let us define $\widetilde u^\eps(t)=e^{itD_{x_1}^2}u^\eps(t)$ and $\widetilde \theta^\eps(t)=e^{itD_{x_1}^2}\theta^\eps(t)$ and write the equation satisfied by the difference $w^\eps(t)=\widetilde u^\eps(t)-\widetilde \theta^\eps(t)$:
\begin{align}
i\dr_{t}w^\eps=&e^{itD_{x_{1}}^2}\left(-\frac{\kappa^2(x_{1})}{4}\right)e^{-itD_{x_{1}}^2}w^\eps+\lambda\gamma \left(F(t;\widetilde u^\eps)-F(t;\widetilde\theta^\eps)\right)\nonumber\\
&+e^{itD_{x_{1}}^2}\left(\langle R_{\eps}(\phy^\eps),e_1\rangle_{\sL^2((-1,1))}+\langle S_{\eps}(\phy^\eps),e_1\rangle_{\sL^2((-1,1))}\right)\label{conj}
\end{align}
with $w^\eps(0)=0$. Hence, \eqref{lipF} together with the Sobolev embedding $\sH^1(\M)\hookrightarrow \sL^\infty$ and the bounds \eqref{conft}, \eqref{boundH1-theta}, \eqref{e5}, yield
$$\|\dr_{t}w^\eps(t)\|_{\sL^2}\leq C\|w^\eps(t)\|_{\sL^2}+\eps\,CC_{M,T}(1+M_2),\quad w^\eps(0)=0.$$
The Gronwall lemma gives then, for all $t\leq \min(T,T_\eps)$,
\begin{equation}
\label{e6}
\|u^\eps(t)-\theta^\eps(t)\|_{\sL^2}=\|w^\eps(t)\|_{\sL^2}\leq \eps\,CC_{M,T}(1+M_2)e^{CT}.
\end{equation}
We have proved the estimate \eqref{erru} for all $t\leq \min(T,T_\eps)$, and the proof of Theorem \ref{mainthmH2} will be complete if we show that there exists $\eps_T>0$ such that, for all $\eps\in(0,\eps_T)$, we have $T_\eps\geq T$.

Let us proceed by contradiction and assume that $T_\eps<T$. Apply as above the interpolation estimation \eqref{interpolation} at time $T_\eps$:
\begin{align*}
&\|\phy^\eps(T_\eps)\|_{\sL^\infty}\leq \|\phy^\eps(T_\eps)-\theta^\eps(T_\eps) \,e_1\|_{\sL^\infty}+\|\theta^\eps(T_\eps) \,e_1\|_{\sL^\infty}\\
&\quad\leq C\|\phy^\eps(T_\eps)-\theta^\eps(T_\eps) \,e_1\|_{\sL^2}^{1/2}\|\phy^\eps(T_\eps)-\theta^\eps(T_\eps) \,e_1\|_{\sH^2}^{1/2}+\|\theta^\eps(T_\eps)\|_{\sL^\infty}\\
&\quad \leq C\left(\|u^\eps(T_\eps)-\theta^\eps(T_\eps)\|_{\sL^2}+\|(\mathsf{Id}-\Pi_1)\phy^\eps(T_\eps)\|_{\sL^2}\right)^{1/2}\left(\|\phy^\eps(T_\eps)\|_{\sH^2}+\|\theta^\eps(T_\eps) \,e_1\|_{\sH^2}\right)^{1/2}\\
&\qquad +\|\theta^\eps(T_\eps)\|_{\sL^\infty}\\
&\quad \leq \eps ^{1/2}C(1+C_{M,T}(1+M_2)e^{CT})+\frac{M}{2}.
\end{align*}
where we used \eqref{e6}, \eqref{conft}, \eqref{e4}, \eqref{boundH2-theta} and the definition of $M$. Now we choose
$$\eps_T=\min\left(\eps_1(M_0),\left(C(1+C_{M,T}(1+M_2)e^{CT})\right)^{-2}\left(\frac{M}{4}\right)^2\right)$$
and obtain that, for all $\eps\in (0,\eps_T)$,
$$\|\phy^\eps(T_\eps)\|_{\sL^\infty}\leq \frac{3M}{4}<M.$$
Since $T_\eps<+\infty$, this contradicts \eqref{alterTeps}. The proof of Theorem \ref{mainthmH2} is complete.
\hspace*{0cm}\hfill\blacksquare

\begin{proofof}{Lemma \ref{lemma2}}
The interpolation estimate follows from the Sobolev embedding
\begin{equation}
\label{homog}
\forall u\in \sH^2(\R^2),\qquad \|u\|_{\sL^\infty}\lesssim \|u\|_{\sL^2}+\|D_{x_1}^2u\|_{\sL^2}+\|D_{x_2}^2u\|_{\sL^2}
\end{equation}
by a simple homogeneity argument. Indeed, for $\phy\in \sH^2(\R^2)$, non zero, inserting the function $u_\lambda$ defined by $u_\lambda(x)=\phy(\lambda x)$ with $\lambda = \|\phy\|_{\sL^2}^{1/2}\|\phy\|_{\sH^2}^{-1/2}$ in \eqref{homog} yields \eqref{interpolation}.
\end{proofof}

\begin{proofof}{Lemma \ref{lemma1}}
The estimate on $R_\eps(\phy)$ is immediate as soon as one notices that, for all $\eps<\eps_0$, we have $m_\eps\geq 1-\eps_0\|\kappa\|_{\sL^\infty}>0$ and that
\begin{equation}
\label{diffmeps}
\|m_\eps-1\|_{W^{2,\infty}(\mathcal S)}\leq \eps \|\kappa\|_{W^{2,\infty}(\M)}\leq C\eps.
\end{equation}
The estimate on $S_\eps$ follows from \eqref{diffmeps}, from
\begin{equation}
\label{calcSeps}|\phy|^2 \phy-|\Pi_1\phy|^2 \Pi_1\phy=(| \Pi_1\phy|^2+\phy\overline{ \Pi_1\phy})\,(\mathsf{Id}-\Pi_1)\phy+\phy^2\, \overline{(\mathsf{Id}-\Pi_1)\phy}.
\end{equation} and from the interpolation estimate \eqref{interpolation}.
\end{proofof}

\begin{proofof}{Lemma \ref{lemma3}}
Let us consider the time derivative of \eqref{CNLS''}: if $\chi^\eps=\dr_{t}\phy^\eps$, then
$$i\dr_{t}\chi^\eps=\left(\mathcal{H}_{\eps}-\eps^{-2}\mu_{1}\right)\chi^\eps+V_{\eps}\chi^\eps+2m_{\eps}^{-1}\lambda|\phy^\eps|^2\chi^\eps+\lambda m_{\eps}^{-1}(\phy^\eps)^2\overline{\chi^\eps}.$$
Take the $\sL^2(\mathcal S)$ scalar product with $\chi^\eps$ and then the imaginary part to get,
$$\frac{1}{2}\frac{\dx }{\dx t}\|\chi^\eps\|^2_{\sL^2}\leq C\int_{\mathcal S} |\phy^\eps|^2|\chi^\eps|^2\dx x_{1} \dx x_{2}\leq CM^2\|\chi^\eps\|^2_{\sL^2},$$
where we have used the assumption on the $\sL^\infty$ bound of $\phy^\eps$ on $[0,T]$. It remains to apply the Gronwall lemma:
\begin{align}
\|\dr_{t}\phy^\eps(t)\|_{\sL^2}
&\leq \|\dr_{t}\phy^\eps(0)\|_{\sL^2}\,e^{CM^2t}\nonumber\\
&\quad = \left\|\left(\mathcal H_\eps-\frac{\mu_1}{\eps^2}\right)\phy^\eps(0)+\left(V_{\eps}+\lambda m_{\eps}^{-1}|\phy^\eps(0)|^2\right)\phy^\eps(0)\right\|_{\sL^2}e^{CM^2t},\nonumber\\
&\leq \left(\left\|\left(\mathcal H_\eps-\frac{\mu_1}{\eps^2}\right)\phy^\eps(0)\right\|_{\sL^2}+C(1+M^2)\|\phy^\eps(0)\|_{\sL^2}\right)e^{CM^2t},\label{dtL2}
\end{align}
where we used the equation \eqref{CNLS''} to identify and bound $\dr_{t}\phy^\eps(0)$. We can now deduce an $\sH^2$ bound of $\phy^\eps$. Indeed, by \eqref{CNLS''}, we have 
\begin{align*}
\left\|\left(\mathcal H_\eps-\frac{\mu_1}{\eps^2}\right)\phy^\eps(t)\right\|_{\sL^2}
&=\left\|i\dr_{t}\phy^\eps-V_{\eps}\phy^\eps-\lambda m_{\eps}^{-1} |\phy^\eps|^2\phy^\eps\right\|_{\sL^2}\\
&\leq \|\dr_{t}\phy^\eps(t)\|^2_{\sL^2}+C(1+M^2)\|\phi^\eps(t)\|_{\sL^2}
\end{align*}
and the conclusion follows by using \eqref{dtL2} and $\|\phi^\eps(t)\|_{\sL^2}=\|\phi^\eps(0)\|_{\sL^2}$.
\end{proofof}

\subsection{Proof of Theorem \ref{mainthmH1}}
\label{sectionmainthmH1}

Consider a sequence of Cauchy data $\phi_0^\eps\in \sH^1_0(\mathcal S)$ satisfying Assumption \ref{assumption2}. Let $\theta^\eps(t)$ and $\phy^\eps(t)$ be respectively the global solutions of \eqref{limit-equation} and \eqref{CNLS''} (for $\eps\in (0,\eps_1(M_0))$. Items {\em (i)} and  {\em (ii)} of Theorem \ref{mainthmH1} stem from Propositions \ref{wp-limit} and \ref{wp}.

Let us prove Item {\em (iii)}. Since $\mathcal E_\eps(\phy^\eps(t))=\mathcal E_\eps(\phi_0^\eps)\leq M_1$ and by conservation of the $\sL^2$ norm, we deduce as in the proof of Theorem \ref{mainthmH2} that the estimates \eqref{conft} hold true, for all $t\geq 0$.

Let us regularize the initial data by setting
\begin{equation}
\label{regu}
\phi_0^{\eps,\eta}=\Pi_1 (1+\eta \eps D_{x_1}^2)^{-1/2}\phi_0^\eps,
\end{equation}
where $\eta>0$ is a parameter independent from $\eps$ that will be chosen later. It is clear that we have $\phi_0^{\eps,\eta}\in \sH^2\cap \sH^1_0(\mathcal S)$, with the following estimates:
$$\|\phi_0^{\eps,\eta}\|_{\sL^2}\leq \|\phi_0^\eps\|_{\sL^2},\quad \|D_{x_1}\phi_0^{\eps,\eta}\|_{\sL^2}\leq \|D_{x_1}\phi_0^\eps\|_{\sL^2},$$
$$\|D_{x_1}^2\phi_0^{\eps,\eta}\|_{\sL^2}\leq (\eta \eps)^{-1/2}\|\phi_0^\eps\|_{\sH^1},
$$
and that $(D_{x_2}^2-\mu_1)\phi_0^{\eps,\eta}=0$. In particular, we deduce from \eqref{equivnorm2}, \eqref{equivnorm} and \eqref{normL4} that
\begin{equation}
\label{borneseta}
\|\phi_0^{\eps,\eta}\|_{\sL^2}\leq M_0,\qquad \mathcal E_\eps(\phi_0^{\eps,\eta})\leq CM_1,\qquad \left\|(\mathcal H_\eps-\frac{\mu_1}{\eps^2})\phi_0^{\eps,\eta}\right\|_{\sL^2}\leq C(\eta \eps)^{-1/2}.
\end{equation}
Let $\phy^{\eps,\eta}(t)$ be the $\sH^2\cap \sH^1_0$ solution of \eqref{CNLS''} with the Cauchy data $\phi_0^{\eps,\eta}$ and let $\theta^{\eps,\eta}(t)$ be the solution of \eqref{limit-equation} with the Cauchy data $\langle\phi^{\eps,\eta}_0,e_1\rangle_{\sL^2((-1,1))}$. By Proposition \ref{wp} and since $\eps<\eps_1(M_0)$ and $\|\phi_0^{\eps,\eta}\|_{\sL^2}\leq M_0$, this solution $\phy^{\eps,\eta}(t)$ is defined for all $t\in \R_+$. Moreover, from $\mathcal E_\eps(\phy^{\eps,\eta}(t))=\mathcal E_\eps(\phi_0^{\eps,\eta})$ and the first two inequalities of \eqref{borneseta}, one gets again from Lemma \ref{cauchy-tensorial} that
\begin{equation}
\label{confteta}
\|\phy^{\eps,\eta}(t)\|_{\sH^1(\mathcal S)}\leq C\quad\mbox{and}\quad \|(\mathsf{Id}-\Pi_1)\phy^{\eps,\eta}(t)\|_{\sL^2(\M,\sH^1(-1,1))}\leq C\eps.
\end{equation}

\bigskip
\noindent
{\em Step 1: estimating $\phy^{\eps,\eta}(t)-\theta^{\eps,\eta}(t)e_{1}$.} Let us reproduce the series of estimates obtained in the proof of Theorem \ref{mainthmH2}. We have to take care to the fact that, here, the $\sH^2$ norm of $\phy^{\eps,\eta}(t)$ is not uniformly bounded but is of order $\eps^{-1/2}$. Defining $M$ by
$$M=2\sup_{\eps\in (0,\eps_1(M_0))}\sup_{t\geq 0}\|\theta^{\eps,\eta}(t)\|_{\sL^\infty(\M)}<+\infty,$$
and remarking that $\phi^{\eps,\eta}_0(x_1,x_2)=\theta^{\eps,\eta}_0(x_1)e_1(x_2)$, we get
$$
\|\phi_0^{\eps,\eta}\|_{\sL^\infty}=\|\theta_0^{\eps,\eta}\|_{\sL^\infty}\leq \frac{M}{2}.
$$
This enables to define $T_\eps>0$ similarly as above, by
$$T_\eps=\max\left\{t\geq 0\,:\,\mbox{for all }s\in [0,t],\,\|\phy^{\eps,\eta}(s)\|_{\sL^\infty}\leq M\right\}.$$
Next, Lemma \ref{lemma3} yields
$$\|\phy^{\eps,\eta}\|_{\sH^2}\leq CC_{M,T}(1+(\eta \eps)^{-1/2})$$
and \eqref{e5}, \eqref{e6} are now respectively replaced by
$$
\|R_\eps(\phy^{\eps,\eta})\|_{\sL^2}+\|S_\eps(\phy^{\eps,\eta})\|_{\sL^2}\leq \eps\,CC_{M,T}(1+(\eta \eps)^{-1/2})
$$
and, if we define $u^{\eps,\eta}=\langle\phy^{\eps,\eta}(t),e_1\rangle_{\sL^2((-1,1))}$, by
\begin{equation}
\label{esti1}
\|u^{\eps,\eta}(t)-\theta^{\eps,\eta}(t)\|_{\sL^2}\leq \eps\,CC_{M,T}(1+(\eta \eps)^{-1/2})e^{CT}\leq C_T\eps^{1/2}.
\end{equation}
This bound \eqref{esti1} holds true for all $t\leq \min (T,T_\eps)$. To show that $T_\eps\geq T$, we estimate again the $\sL^\infty$ norm of $\phy^{\eps,\eta}$ by interpolation, and obtain
\begin{align*}
&\|\phy^{\eps,\eta}(T_\eps)\|_{\sL^\infty}\leq \|\theta^{\eps,\eta}(T_\eps)\|_{\sL^\infty}+\\
&\quad + C\left(\|u^{\eps,\eta}(T_\eps)-\theta^{\eps,\eta}(T_\eps)\|_{\sL^2}+\|(\mathsf{Id}-\Pi_1)\phy^{\eps,\eta}(T_\eps)\|_{\sL^2}\right)^{1/2}\left(\|\phy^{\eps,\eta}(T_\eps)\|_{\sH^2}+\|\theta^{\eps,\eta}(T_\eps) \,e_1\|_{\sH^2}\right)^{1/2}\\
&\quad \leq \frac{M}{2}+C(\eta^{-1/4}\eps ^{1/4}+\eps^{1/2})(\eta^{-1/4}\eps^{-1/4}+1)\\
&\quad \leq \frac{M}{2}+C\eta^{-1/2}+C\eta^{-1/4}\eps^{1/4}+C\eps^{1/2}.
\end{align*}
Hence we first choose $\eta>0$ such that $C\eta^{-1/2}\leq \frac{M}{8}$. Then we choose $\eps_T>0$ such that $C\eta^{-1/4}\eps_T^{1/4}+C\eps_T^{1/2}\leq \frac{M}{8}$. Therefore, for all $\eps<\eps_T$, we have $\|\phy^{\eps,\eta}(T_\eps)\|_{\sL^\infty}\leq 3M/4$, which is sufficient to conclude as above that $T_\eps\geq T$.  Finally, by \eqref{esti1} and \eqref{confteta}, we have obtained that, for all $t\leq T$,
\begin{equation}
\label{esti2}
\|\phy^{\eps,\eta}(t)-\theta^{\eps,\eta}(t)e_1\|_{\sL^2}\leq \|u^{\eps,\eta}(t)-\theta^{\eps,\eta}(t)\|_{\sL^2}+\|(\mathsf{Id}-\Pi_1)\phy^{\eps,\eta}(t)\|_{\sL^2}\leq C\eps^{1/2}.
\end{equation}

\bigskip
\noindent
{\em Step 2: stability estimates.} Let us now estimate the two differences $\theta^\eps(t)-\theta^{\eps,\eta}(t)$ and $\phy^\eps(t)-\phy^{\eps,\eta}(t)$. By \eqref{regu}, we have
\begin{align}
\|\phi_0^\eps-\phi_0^{\eps,\eta}\|_{\sL^2}
&\leq \|(\mathsf{Id}-\Pi_1)\phi_0^\eps\|_{\sL^2}+\|\Pi_1 (1-(1+\eta \eps D_{x_1}^2)^{-1/2})\phi_0^\eps\|_{\sL^2}\nonumber\\
&\leq C\eps+\eta\eps\|D_{x_1}^2(1+(1+\eta \eps D_{x_1}^2)^{1/2})^{-1}(1+\eta \eps D_{x_1}^2)^{-1/2}\phi_0^\eps\|_{\sL^2}\nonumber\\
&\leq C\eps+\eta^{1/2}\eps^{1/2}\|D_{x_1}\phi_0^\eps\|_{\sL^2}\leq C\eps^{1/2}.\label{zinit}
\end{align}
The two functions $\theta^\eps(t)$ and $\theta^{\eps,\eta}(t)$ satisfy the same equation \eqref{limit-equation}, respectively with the initial data $\langle\phi^\eps_0,e_1\rangle_{\sL^2((-1,1))}$ and $\langle\phi^{\eps,\eta}_0,e_1\rangle_{\sL^2((-1,1))}$. Hence, since $\theta^\eps(t)$ and $\theta^{\eps,\eta}(t)$ are uniformly bounded in $\sL^\infty(\M)$ (because they are bounded in $\sH^1(\M)$), a standard stability estimate in $\sL^2$  yields, with the Gronwall lemma,
\begin{equation}
\label{esti3}
\|\theta^\eps(t)-\theta^{\eps,\eta}(t)\|_{\sL^2}\leq \|\phi_0^\eps-\phi_0^{\eps,\eta}\|_{\sL^2}\,e^{Ct}\leq C\eps^{1/2}e^{Ct}.
\end{equation}
To estimate the difference $z(t)=\phy^\eps(t)-\phy^{\eps,\eta}(t)$, we have to proceed in a different way, since $\phy^\eps(t)$ does not belong to $\sL^\infty(\mathcal S)$. Recall that  $\phy^\eps(t)$ and $\phy^{\eps,\eta}(t)$ satisfy the same equation \eqref{CNLS''}, with the initial data $\phi^\eps(0)=\phi^\eps_0$ and $\phy^{\eps,\eta}(0)=\phi^{\eps,\eta}_0$. Hence, the standard $\sL^2$ estimate on the difference $z^\eps(t)$ leads to
\begin{align}
\frac{\dx }{\dx t}\|z^\eps\|_{\sL^2}
&\leq |\lambda| \left\||\phy^\eps|^2|\phy^\eps|^2-|\phy^{\eps,\eta}|^2|\phy^{\eps,\eta}|^2\right\|_{\sL^2}\nonumber\\
&\leq \|\widetilde S_\eps(\phy^\eps)\|_{\sL^2}+\| \widetilde S_\eps(\phy^{\eps,\eta})\|_{\sL^2}+\||\Pi_1\phy^\eps|^2\Pi_1\phy^\eps-|\Pi_1\phy^{\eps,\eta}|^2\Pi_1\phy^{\eps,\eta}\|_{\sL^2}\label{esti5}
\end{align}
where $$\widetilde S_\eps(\phy)= \lambda |\phy|^2\phy-\lambda |\Pi_1\phy|^2\Pi_1\phy.$$ We now estimate $S_\eps(\phy^\eps)$ by coming back to \eqref{calcSeps}. By the H\"older and Minkowski inequalities and using that $\|\Pi\phy\|_{\sL^p}\lesssim \|\phy\|_{\sL^p}$, we get
\begin{align*}
\| \widetilde S_\eps(\phy^\eps)\|_{\sL^2}\leq C\|\phy^\eps\|_{\sL^{10}}^{5/2}\|(\mathsf{Id}-\Pi_1)\phy^\eps\|_{\sL^2}^{1/2}.
\end{align*}
Then, with a Sobolev embedding and \eqref{conft}, we deduce
$$\widetilde S_\eps(\phy^\eps)\|_{\sL^2}\leq C\|\phy^\eps\|_{\sH^1}^{5/2}\|(\mathsf{Id}-\Pi_1)\phy^\eps\|_{\sL^2}^{1/2}\leq C\eps^{1/2}.$$
Similarly, we also obtain $\|\widetilde S_\eps(\phy^{\eps,\eta})\|_{\sL^2}\leq C\eps^{1/2}$ thanks to \eqref{confteta}.
The last term in \eqref{esti5} is easy to estimate, since for all $\phy\in \sH^1_0(\mathcal S)$, one has, by Sobolev embedding in dimension one, $\|\Pi_1\phy\|_{\sL^\infty}\lesssim \|\phy\|_{\sH^1}$. Thus, by using again \eqref{conft} and \eqref{confteta}, we get,
\begin{align*}
\||\Pi_1\phy^\eps|^2\Pi_1\phy^\eps-|\Pi_1\phy^{\eps,\eta}|^2\Pi_1\phy^{\eps,\eta}\|_{\sL^2}
&\leq C\left(\|\Pi_1\phy^\eps\|_{\sL^\infty}+\|\Pi_1\phy^{\eps,\eta}\|_{\sL^\infty}\right)\|z^\eps\|_{\sL^2}\\
&\leq C\left(\|\phy^\eps\|_{\sH^1}+\|\phy^{\eps,\eta}\|_{\sH^1}\right)\|z^\eps\|_{L^2}\leq C\|z^\eps\|_{L^2}.
\end{align*}
Hence, \eqref{esti5} and \eqref{zinit} yield
$$
\frac{\dx }{\dx t}\|z^\eps(t)\|_{\sL^2}\leq C\eps^{1/2}+C\|z^\eps(t)\|_{\sL^2},\quad \|z^\eps(0)\|_{\sL^2}\leq C\eps^{1/2}
$$
and we conclude by the Gronwall lemma that 
\begin{equation}
\label{esti6}
\|\phy^\eps(t)-\phy^{\eps,\eta}(t)\|_{\sL^2}=\|z^\eps(t)\|_{\sL^2}\leq C\eps^{1/2}e^{Ct}.
\end{equation}
Finally, from \eqref{esti2}, \eqref{esti3} and \eqref{esti6}, one deduces the error estimate \eqref{esti-erreur}. The proof of Theorem \ref{mainthmH1} is complete.\hspace*{0cm}\hfill\blacksquare

\bigskip
\subsection*{Acknowledgements}
We wish to thank Christof Sparber for helpful discussions. F.M. acknowledges support by the ANR-FWF Project Lodiquas (ANR-11-IS01-0003) and by the ANR Project Moonrise (ANR-14-CE23-0007-01).


\end{document}